\lstdefinestyle{qborrow}{
  basicstyle=\ttfamily\small,
  keywordstyle=\color{blue!50!black}\ttfamily,
  commentstyle=\color{gray}\ttfamily,
  numberstyle=\color{gray}\ttfamily\tiny,
  frame=single,
  columns=fullflexible,
  keepspaces=true,
  showstringspaces=false,
  morekeywords={borrow,borrow@,X,CNOT,CCNOT}, 
  morecomment=[l]{//}, 
  numbers=left
}
\begin{document}
\title{Borrowing Dirty Qubits in Quantum Programs}
\numberwithin{equation}{section}
\numberwithin{figure}{section}
\newcommand{\bC}{{\mathbb{C}}}
\newcommand{\bF}{{\mathbb{F}}}
\newcommand{\bN}{{\mathbb{N}}}
\newcommand{\bR}{{\mathbb{R}}}
\newcommand{\Z}{{\mathbb{Z}}}
\newcommand{\cD}{\mathcal{D}}
\newcommand{\cP}{\mathcal{P}}
\newcommand{\cH}{\mathcal{H}}
\newcommand{\cE}{\mathcal{E}}
\newcommand{\cI}{\mathcal{I}}
\newcommand{\cL}{\mathcal{L}}
\newcommand{\cM}{\mathcal{M}}
\newcommand{\cX}{\mathcal{X}}
\newcommand{\cC}{\mathcal{C}}
\newcommand{\bP}{\mathbb{P}}
\newcommand{\cG}{\mathcal{G}}
\newcommand{\cF}{\mathcal{F}}
\newcommand{\cS}{\mathcal{S}}
\newcommand{\cY}{\mathcal{Y}}
\newcommand{\cZ}{\mathcal{Z}}

\newcommand{\tX}{\mathrm{X}}
\newcommand{\tH}{\mathtt{H}}
\newcommand{\tZ}{\mathtt{Z}}
\newcommand{\tCNOT}{\mathrm{CNOT}}
\newcommand{\tToffoli}{\mathrm{CCNOT}}

\newcommand{\TT}{{\mathtt{T}}}
\newcommand{\FF}{{\mathtt{F}}}

\newcommand{\sB}{\mathscr{B}}

\newcommand{\norm}[1]{\left \|{#1}\right \|}
\newcommand{\cnorm}[1]{{\left\vert\kern-0.25ex\left\vert\kern-0.25ex\left\vert #1 \right\vert\kern-0.25ex\right\vert\kern-0.25ex\right\vert}}
\newcommand{\red}[1]{{\color{purple}#1}}
\newcommand{\blue}[1]{{\color{blue}#1}}
\newcommand{\green}[1]{{\color{teal}#1}}
\newcommand{\emp}{\text{emp}}
\newcommand{\eq}[1]{(\ref{eq:#1})}
\renewcommand{\sec}[1]{Section~\ref{sec:#1}}

\renewcommand{\>}{\rangle}
\newcommand{\<}{\langle}
\newcommand{\true}{\textbf{true}}
\newcommand{\false}{\textbf{false}}
\newcommand{\Tr}{\operatorname{Tr}}
\newcommand{\rank}{\operatorname{rank}}
\newcommand{\qubits}{\mathsf{qubits}}
\newcommand{\pDensity}{\mathcal{D}^-(\mathcal{H})}
\newcommand{\sem}[1]{\left \llbracket #1 \right \rrbracket}
\newcommand{\opsem}[1]{\llparenthesis\, #1\, \rrparenthesis}
\newcommand{\lang}[1]{\left \langle #1 \right \rangle}
\newcommand{\greenb}[1]{\green{\left \{ #1 \right \}}}
\newcommand{\blueb}[1]{\blue{\left \{ #1 \right \}}}
\newcommand{\dens}[1]{{| #1 \rangle\langle #1 |}}
\newcommand{\while}[2]{\textbf{while } #1 \textbf{ do } #2 \textbf{ end}}
\newcommand{\ifel}[3]{\textbf{if } #1 \textbf{ then } #2 \textbf{ else } #3 \textbf{ end}}
\newcommand{\type}{\textbf{type}}
\newcommand{\ledis}{\mathrel{<\mkern-6mu*}}
\newcommand{\sepimp}{\mathrel{-\mkern-6mu*}}
\newcommand{\otimesimp}{\mathrel{-\mkern-3mu\otimes}}
\newcommand{\dom}{\mathop{dom}}
\newcommand{\cod}{\mathop{cod}}
\newcommand{\qreg}{{\overline{q}}}
\newcommand{\xreg}{{\overline{x}}}
\newcommand{\triple}[3]{\blueb{#1}~ #2 ~\blueb{#3}}
\newcommand{\trace}{\text{tr}}
\newcommand{\PH}{\mathcal{P}(\mathcal{H})}
\newcommand{\SPAN}{\mathop{\text{span}}}
\newcommand{\supp}[1]{{\left \lceil #1 \right \rceil}}
\newcommand{\uint}{\text{uint}}
\newcommand{\pare}[1]{\left ( #1 \right )}
\newcommand{\redp}[1]{{\color{purple}\left [ \color{black}#1 \color{purple}\right ]}}
\newcommand{\bluep}[1]{{\color{blue}\left \{ \color{black}#1 \color{blue}\right \}}}
\newcommand{\tskip}{\mathbf{skip}}
\newcommand{\tabort}{\mathbf{abort}}
\newcommand{\tif}{\mathbf{if}}
\newcommand{\tthen}{\mathbf{then}}
\newcommand{\telse}{\mathbf{else}}
\newcommand{\tfi}{\mathbf{fi}}
\newcommand{\tfor}{\mathbf{for}}
\newcommand{\twhile}{\mathbf{while}}
\newcommand{\tstate}{\texttt{State}}
\newcommand{\tdo}{\mathbf{do}}
\newcommand{\tend}{\mathbf{end}}
\newcommand{\tborrow}{\mathbf{borrow}}
\newcommand{\tfree}{\mathit{free}}
\newcommand{\tlocal}{\mathit{local}}
\newcommand{\talloc}{\mathbf{alloc}}
\newcommand{\trel}{\mathbf{release}}
\newcommand{\tcnot}{\texttt{CNOT}}
\newcommand{\tuse}{\mathbf{use}}
\newcommand{\tbegin}{\mathbf{begin}}
\newcommand{\tmaxext}{\mathbf{maxext}}
\newcommand{\fsubseteq}{{~\subseteq_{fin}~}}
\newcommand{\pluseq}{\mathrel{+}=}
\newcommand{\bwhile}{\mathbf{while}}
\newcommand{\sasaki}{\rightsquigarrow}

\newcommand{\gateX}{\mathrm{X}}
\newcommand{\gateCNOT}{\mathrm{CNOT}}
\newcommand{\gateToffoli}{\mathrm{Toffoli}}
\newcommand{\gateCCCNOT}{\mathrm{CCCNOT}}

\newcommand{\idle}{idle}
\newcommand{\tqubit}{\mathtt{qubit}}
\newcommand{\tterm}{\mathtt{term}}
\newcommand{\trexpr}{\mathtt{rexpr}}
\newcommand{\tccirc}{\mathtt{circ_{\cC}}}
\newcommand{\tclean}{\mathtt{qubit}_c}
\newcommand{\tdirty}{\mathtt{qubit}_d}
\newcommand{\trcirc}{\mathtt{rcirc}}
\newcommand{\tmidc}{\mathtt{midc}}

\begin{abstract}
\emph{Dirty qubits} are ancillary qubits that can be {borrowed} from idle parts of a computation, enabling qubit reuse and reducing the demand for fresh, clean qubits---a resource that is typically scarce in practice. For such reuse to be valid, the initial states of the dirty qubits must not affect the functionality of the quantum circuits in which they are employed. Moreover, their original states, including any entanglement they possess, must be fully restored after use---a requirement commonly known as  \emph{safe uncomputation}.
In this paper, we formally define the semantics of dirty-qubit borrowing as a feature in quantum programming languages, and introduce a notion of safe uncomputation for dirty qubits in quantum programs.  
We also present an efficient algorithm, along with experimental results, for verifying safe uncomputation of dirty qubits in certain quantum circuits.
\end{abstract}

\author{Bonan Su}
	\affiliation{
	\institution{Department of Computer Science and Technology, Tsinghua University}
	\city{Beijing}
	\country{China}
}
\email{sbn24@mails.tsinghua.edu.cn}

\author{Li Zhou}
 \affiliation{
  \institution{Key Laboratory of System Software (Chinese Academy of Sciences) and State Key Laboratory of Computer Science, Institute of Software, Chinese Academy of Sciences}   
  \city{Beijing}       
  \country{China}
}
\email{zhouli@ios.ac.cn}

\author{Yuan Feng}
\affiliation{
 \institution{Department of Computer Science and Technology, Tsinghua University}
 \city{Beijing}
 \country{China}
 }
\email{yuan_feng@tsinghua.edu.cn}

\author{Mingsheng Ying}
 \affiliation{
  \institution{Centre for Quantum Software and Information, University of Technology Sydney}          
    \city{Ultimo}
  \country{Australia}
}
\email{Mingsheng.Ying@uts.edu.au}

\maketitle 

\section{Introduction}\label{sec:introduction}
Quantum programming languages—crucial for the large-scale deployment of quantum computing—have rapidly evolved and attracted significant attention from both academia~\cite{liqui,silq,qunity,qwire,quipper} and industry~\cite{qiskit, openqasm, cirq, qsharp}.
However, quantum hardware in the current \emph{Noisy Intermediate-Scale Quantum} (NISQ) era~\cite{nisq} offers only a limited number of logical qubits and supports circuits of restricted depth and size.
These limitations are expected to persist in the near future~\cite{nisq_beyond}, posing significant challenges to realizing large-scale, practical quantum supremacy.
Therefore, it is necessary to introduce new features at the programming language level to conserve and optimize the use of these scarce quantum resources.

Introducing ancillary qubits is a crucial technique for reducing the size and depth of quantum circuits. These qubits are typically categorized as either \textit{clean qubits}, initialized in a known state such as the ground state \(|0\>\), or \textit{dirty qubits}, which may start in an arbitrary (possibly entangled) state.

Dirty qubits have been extensively studied in the design and implementation of quantum circuits, including elementary gates~\cite{elementary}, unitary synthesis~\cite{unitarysynthesis}, cryptography~\cite{crypt}, and arithmetic circuits~\cite{haner2016factoring,gidney2018factoringn2cleanqubits}. 
Figure~\ref{fig:benchmark}, adapted from Table 1 in~\cite{haner2016factoring}, summarizes the resource costs of various implementations of constant addition circuits.
While clean qubits are generally more effective for reducing circuit size or depth, dirty qubits offer greater scheduling flexibility, since the independence of their initial states allows dirty qubits to be \emph{borrowed} from any idle part of the computation---depicted in Figure~\ref{fig:borrow} with four working qubits \(q_1,q_2,q_3,q_4\).
Despite the broad investigation of dirty qubits in quantum algorithms and circuit design, relatively little attention has been given to dirty qubits from a programming language perspective, with Microsoft's Q\#~\cite{qsharp, qsharpmemory} being one of the few notable exceptions.

Unlike clean qubits, which store intermediate results in their states, correct computation with dirty qubits hinges on two key requirements:  
(1) the computation must be independent of the dirty qubits' initial states; and  
(2) the computation must \emph{safely uncompute} dirty qubits before releasing them.

For the first requirement, the \emph{toggling trick}~\cite{haner2016factoring} enables information to be propagated independently of a qubit's initial state.  
For the second requirement, analogous to the uncomputation of clean qubits~\cite{silq}, safe uncomputation of dirty qubits entails restoring each borrowed qubit---together with any entanglement it holds---to its original state after computation, which is a stricter requirement than merely restoring to the ground state for uncomputing clean qubits.


\begin{figure}[t]
    \centering
    \small
    \begin{tabular}{ccccc}
        & Cuccaro~\cite{cuccaro} & Takahashi~\cite{takahashi} & Draper~\cite{draper} & H\"aner~\cite{haner2016factoring}\\
        \hline\hline
        Size &\(\varTheta(n)\) &\(\varTheta(n)\) &\(\varTheta(n^2)\) & \(\varTheta(n\log n)\)\\
        Depth&\(\varTheta(n)\) &\(\varTheta(n)\) &\(\varTheta(n)\) &\(\varTheta(n)\) \\
        Ancillas &\(n+1\)(clean) & \(n\)(clean) & 0 & 1 (dirty)\\
        \hline
    \end{tabular}
    \caption{Costs associated with various implementations of addition \(|a\>\mapsto |a+c\>\) of a value \(a\) by a classical constant \(c\).}
    \label{fig:benchmark}
\end{figure}

\begin{figure}[t]
    \includegraphics[width=\linewidth]{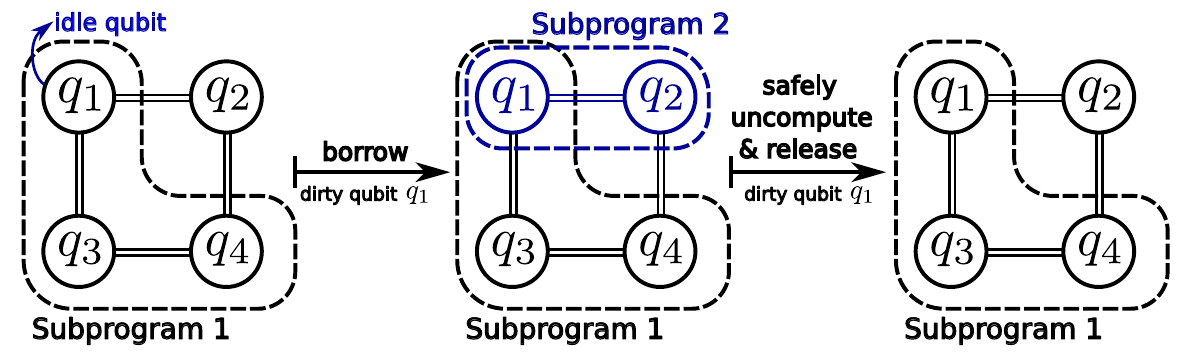}
    \caption{Borrowing idle qubits as dirty ancillas for reuse.
    }
    \label{fig:borrow}
\end{figure}


\paragraph{Difficulty in characterizing safe uncomputation of dirty qubits}
With an illustrative example, we will show that even for the most elementary circuits, formally defining safe uncomputation of dirty qubits is a non-trivial task and may be counterintuitive.
Consider the circuit for the three-controlled NOT (\(\gateCCCNOT\)) gate shown in Figure~\ref{fig:three-controlled-not}. 
It can be verified that the two circuits in Figure~\ref{fig:three-controlled-not} are \emph{equivalent}, meaning they implement the same unitary transformation regardless of the initial states of qubits \(q_1,q_2,a,q_3\) and \(q_4\). 
Therefore, a \(\gateCCCNOT\) gate can be realized using four \(\gateToffoli\) gates and one dirty qubit \(a\).

Since the circuit implements a \emph{classical} function---alternatively, it consists solely of NOT and multi-controlled NOT gates---a natural and intuitive approach to characterizing safe uncomputation is to examine all possible initial states in the computational basis, and require that the circuit restores each dirty qubit to \(|0\>\) if it was initially in \(|0\>\), and to \(|1\>\) if it was initially in \(|1\>\).

Formally, the initial state of the qubits \(q_1, q_2, a, q_3, q_4\) in the computational basis can be represented as a binary vector \((x_1, x_2, y, x_3, x_4) \in \{0,1\}^5\), and the circuit operates as a classical function \(f: \{0,1\}^5 \to \{0,1\}^5\). 
The naive definition of safe uncomputation presented in the previous paragraph would require that the final value of the dirty qubit matches its initial value for all possible inputs; that is, \(f\) safely uncomputes \(a\) if and only if the third bit of \(f(x_1, x_2, y, x_3, x_4)\) equals \(y\) for every \((x_1, x_2, y, x_3, x_4) \in \{0,1\}^5\).

Nevertheless, while this condition suffices for the uncomputation of clean qubits, it falls significantly short of capturing the safe uncomputation of dirty qubits.
A counterexample is presented in Figure~\ref{fig:unsafe-uncomputation}, where the circuit satisfies the aforementioned condition for the dirty qubit \(a\) but does not safely uncompute it. 
Particularly, the circuit fails to restore the dirty qubit \(a\) to its initial state when \(a\) is initialized in the superposition state \(|+\> = \tfrac{1}{\sqrt{2}} (|0\> + |1\>)\).

Consequently, even in elementary circuits, the safe uncomputation of dirty qubits requires a more nuanced understanding of quantum phenomena such as superposition and entanglement.
We now extend this discussion to quantum programs and formally pose the problem for borrowing and safely uncomputing dirty qubits as follows:
\begin{quote}
    \emph{What constitutes a proper formal semantics for borrowing and safely uncomputing dirty qubits in quantum programs?}
\end{quote}

\begin{figure}[t]
    \begin{minipage}{0.5\linewidth}
        \begin{quantikz}[column sep = 0.5em]
            \lstick{\(q_1\)}&\ctrl{1}&\qw\\
            \lstick{\(q_2\)}&\ctrl{2}&\qw\\
            \lstick{\(a\)}  &\qw&\qw\\
            \lstick{\(q_3\)}&\ctrl{1}&\qw\\
            \lstick{\(q_4\)}&\targ{}&\qw
        \end{quantikz}
        {\huge =}
        \begin{quantikz}[row sep = 1.2em, column sep = 0.5em]
            \qw&\ctrl{1}&\qw     &\ctrl{1}&\qw&\qw\\
            \qw&\ctrl{1}&\qw     &\ctrl{1}&\qw&\qw\\
            \qw&\targ{} &\ctrl{1}&\targ{} &\ctrl{1}&\qw\\
            \qw&\qw     &\ctrl{1}&\qw     &\ctrl{1}&\qw\\
            \qw&\qw     &\targ{} &\qw     &\targ{}&\qw
        \end{quantikz}
        \caption{Three-controlled NOT gate with dirty qubit \(a\).\\\;}
        \label{fig:three-controlled-not}
    \end{minipage}
    \hspace{0.7em}
    \begin{minipage}{0.45\linewidth}
        \centering
        \begin{quantikz}[row sep = 1.2em, column sep = 0.5em]
            \qw&\ctrl{1}&\qw     &\ctrl{1}&\qw\\
            \qw&\ctrl{1}&\qw     &\ctrl{1}&\qw\\
 \lstick{\(a\)}&\targ{} &\ctrl{1}&\targ{} &\qw\rstick{\(a\)}\\
            \qw&\qw     &\ctrl{1}&\qw     &\qw\\
            \qw&\qw     &\targ{} &\qw     &\qw
        \end{quantikz}
        \caption{Uncomputing \(a\): safe as a clean qubit but not safe as a dirty qubit.}
        \label{fig:unsafe-uncomputation}
    \end{minipage}
\end{figure}

\paragraph{Our approaches and contributions} In this paper, we propose solutions to the above problem:

\vspace{0.5\baselineskip}
\(\bullet\) \emph{Support for borrowing dirty qubits in a quantum programming language}. 
We extend the primitive quantum programming language \textsf{QWhile}~\cite{qhoare, Ying24} by introducing a new construct for borrowing and releasing dirty qubits in pairs.
Unlike existing approaches to employing dirty qubits in circuit design and optimization, our programming language dispenses with the assumption of safe uncomputation and treats dirty-qubit borrowing as a primitive construct, precisely capturing its full range of behaviors---including unsafe ones---in practice.

Inspired by the approach in~\cite{feng2023verification}, we interpret programs as sets of quantum operations and define an intuitive denotational semantics. 
Our semantics naturally accommodates the nondeterministic selection of borrowed qubits from the set of idle qubits determined by the program's syntactic structure, and also supports conditional branching and looping constructs guarded by quantum measurements.

\vspace{0.5\baselineskip}
\(\bullet\) \emph{Safe uncomputation: no action on dirty qubits.}
As illustrated in Figure~\ref{fig:three-controlled-not}, the safe uncomputation of a dirty qubit can be intuitively understood as rewriting the circuit into an equivalent diagram in which gates act only on the other qubits, while the dirty qubit is represented by a single idle wire (no-op).

Building on the formal semantic framework for our programming language with dirty-qubit borrowing, we define the safe uncomputation of a dirty qubit \(q\) in a quantum program \(S\) as the property that all executions of \(S\) are equivalent to quantum operations acting as the identity on \(q\).  
In more intuitive terms, program \(S\) safely uncomputes \(q\) if and only if executions of \(S\) are computationally indistinguishable from those quantum operations leaving \(q\) untouched. 

\vspace{0.5\baselineskip}
\(\bullet\) \emph{Efficient algorithm with experiments verifying safe uncomputation.}
For any finite-dimensional quantum program, we prove that verifying the safe uncomputation of a dirty qubit requires checking only a finite number of input-output cases. 
In the special case of quantum circuits implementing classical functions---namely, those composed solely of NOT and multi-controlled NOT gates, which encompass a broad class of circuits employing dirty qubits---we further show that verifying safe uncomputation reduces to checking the restoration of just two specific initial states of the dirty qubit. 
This reduction effectively transforms the problem into a classical satisfiability task.

By leveraging state-of-the-art \emph{Satisfiability Modulo Theories} (SMT) and \emph{Boolean Satisfiability Problem} (SAT) solvers such as CVC5~\cite{cvc5} and Bitwuzla~\cite{bitwuzla}, we successfully verify safe uncomputation in practical circuits---including constant adders and multi-controlled NOT gates---with up to hundreds and thousands of qubits, respectively.
The experimental results provide an affirmative answer to the most pressing research question in quantum verification: the scalability and efficiency of our approach.

\vspace{0.5\baselineskip}
The remainder of this paper is organized as follows.  
Section~\ref{sec:background} reviews fundamental concepts in quantum computing and programming used throughout this paper.  
As a motivating prelude, Section~\ref{sec:circuit} explores the role of dirty qubits in the design and optimization of quantum circuits.
Section~\ref{sec:proglang} introduces a quantum programming language supporting dirty-qubit borrowing, followed by Section~\ref{sec:uncomp}, which formalizes safe uncomputation of dirty qubits within this framework.  
In Section~\ref{sec:classical}, we propose an efficient algorithm for verifying safe uncomputation, accompanied by experimental results.  
Finally, we discuss potential architectural applications of dirty qubits in Section~\ref{sec:discussion} and related work in Section~\ref{sec:related}.
\section{Background on Quantum Computing}\label{sec:background}
This section provides basic background on quantum computing and introduces standard notations and principles for formally describing quantum computations using programming languages, viewed as a generalization of the circuit model.

According to the fundamental postulates of quantum mechanics, the \emph{state space} of an \(n\)-qubit quantum system is a \(2^n\)-dimensional complex Hilbert space \(\cH\). 
A \emph{quantum state} of the system is represented by a \emph{density operator} \(\rho\), which is a positive semidefinite (PSD) operator on \(\cH\) with trace equal to 1.
In particularly, we say that \(\rho\) is a \emph{pure state} if there exists a normalized vector, denoted by a \emph{ket} \(|\psi\> \in \cH\) in Dirac notation, such that \(\rho = |\psi\> \<\psi|\), where the \emph{bra} \(\< \psi|\) denotes the complex conjugate transpose of \(|\psi\>\). In this case, we may simply say that the system is in the state \(|\psi\>\).
\(|0\>\triangleq\pare{\begin{smallmatrix} 1 \\ 0 \end{smallmatrix}}\) and \(|1\>\triangleq\pare{\begin{smallmatrix} 0 \\ 1 \end{smallmatrix}}\) are the two basis states of a single qubit, with \(|0\>\) often referred to as the \emph{ground state}. \(|+\>\triangleq \frac{1}{\sqrt{2}}(|0\> + |1\>)\) is another typical one-qubit pure state, which is a linear combination of the two basis states, a phenomenon interpreted as \emph{superposition} in quantum mechanics.

\emph{Quantum operations} refer to physically feasible manipulations on the states of quantum systems, mathematically defined as a \emph{completely positive trace non-increasing linear map} on partial density operators, denoted by \(\cE: \cD(\cH) \to \cD(\cH')\). Here, ``partial'' means that we generalize the density operators to allow for a trace less than one, i.e.,
\[
\cD(\cH) = \{ A \in \cL(\cH) : A \mbox{ is PSD and}\ \Tr(A) \leq 1 \},
\]
where \(\cL(\cH)\) is the set of all linear operators on \(\cH\). 
We introduce partial density operators to describe a generalized program state that encodes termination probabilities~\cite{Ying24,qhoare}.


Given the state spaces \(\cH_1\) and \(\cH_2\) of two quantum systems, the state space of their combined system is defined as the \emph{tensor product}:
\[
\cH_1 \otimes \cH_2 \triangleq \text{span}\{|e_i\>\otimes |e_j\>: i \in I, j \in J\},
\]
where \(|e_i\>\otimes |e_j\>\) denotes the tensor product between vectors, and \(\{|e_i\>: i \in I\}\) and \(\{|e_j\>: j \in J\}\) are bases for \(\cH_1\) and \(\cH_2\), respectively.

Similarly, the product state of \(\rho_1 \in \cD(\cH_1)\) and \(\rho_2 \in \cD(\cH_2)\) is defined by the tensor product \(\rho_1 \otimes \rho_2 \in \cD(\cH_1 \otimes \cH_2)\) of the corresponding density operators.
Note that not all states in \(\cD(\cH_1 \otimes \cH_2)\) can be factorized into the tensor product of two individual states, a phenomenon known as \emph{entanglement}. 
For instance, the \emph{Bell state} \(\rho=|\Phi\rangle \langle \Phi|\), where \(|\Phi\rangle \triangleq \frac{1}{\sqrt{2}}(|0\>\otimes|0\>+|1\>\otimes|1\>)\), is a two-qubit entangled state.

The tensor product of two quantum operations \(\cE_1: \cD(\cH_1)\) \(\to \cD(\cH_1')\) and \(\cE_2: \cD(\cH_2) \to \cD(\cH_2')\) is naturally defined for tensor-factorized states as
\[
(\cE_1 \otimes \cE_2)(\rho_1 \otimes \rho_2) = \cE_1(\rho_1) \otimes \cE_2(\rho_2),
\]
and can be extended to all states by linearity.
For example, given state spaces \(\cH_1\) and \(\cH_2\) of two quantum systems, the \emph{partial trace} \(\Tr_{\cH_1}:\cD(\cH_1\otimes \cH_2)\to \cD(\cH_2)\) is a quantum operation defined as 
\[
\Tr_{\cH_1}= \Tr\otimes \cI_{\cH_2},
\]
where \(\cI_{\cH_2}\) denotes the identity operation on \(\cH_2\) and \(\Tr\) is the trace operation on \(\cH_1\).
For a state \(\rho\in \cD(\cH_1\otimes \cH_2)\) of the combined systems, the partial trace \(\Tr_{\cH_1}(\rho)\) yields the reduced state of the second system, which is obtained by tracing out the first system.

As a primitive model, this paper abstracts away types in the quantum programming language and assumes that quantum operations are applied directly to qubits. 
Specifically, a program state is a partial density operator \(\rho\in\cH_{\qubits}\triangleq\bigotimes_{q\in\qubits}\cH_q\), where \(\qubits\) is the finite set of all available logical qubits (in the quantum computer) and \(\cH_q\) is the two-dimensional state space of each qubit \(q\in\qubits\). 

A subscript \(q\) or \(\bar{q}\), representing a single qubit or a list of distinct qubits respectively, indicates that an operator---such as \(|0\>_{q}\< 0|\) or \(U_{\bar{q}}\)---acts on the specified qubits \(q\) or \(\bar{q}\).
For simplicity, the tensor product with identity operators on other qubits is sometimes omitted. 

Similarly, for a quantum operation \(\cE_{\bar{q}}\), the subscript indicates application to the qubits \(\bar{q}\), with the operation implicitly lifted to the full state space by tensoring with the identity operation \(\cI\) on the other qubits.  

We next introduce several quantum operations commonly used in quantum programming.

\paragraph{Initialization} The initialization operation initializes the qubit \(q\) to the ground state \(|0\>\), and is defined as
\[
\cE_{init,q}(\rho)\triangleq|0\>_q\<0|\rho|0\>_q\<0|+|0\>_q\<1|\rho|1\>_q\<0|.
\]
\paragraph{Unitary transformation} A \emph{unitary operator} on the Hilbert space \(\cH\) is a linear operator \(U\) such that \(U^\dagger U=UU^\dagger = I\), where \(U^\dagger\) is the adjoint operator of \(U\) and \(I\) is the identity operator on \(\cH\). 
The quantum operation for unitary transformation is defined as
\[
\cE_{U,\bar{q}}(\rho) \triangleq U_{\bar{q}} \rho U_{\bar{q}}^\dagger.
\]
Typical unitary operators mentioned in this paper include
\[
\gateX\triangleq\begin{pmatrix} 0 & 1 \\ 1 & 0 \end{pmatrix},
\gateCNOT\triangleq\pare{\begin{smallmatrix} 1 & 0 & 0 & 0 \\ 0 & 1 & 0 & 0 \\ 0 & 0 & 0 & 1 \\ 0 & 0 & 1 & 0 \end{smallmatrix}},
\gateToffoli\triangleq\pare{\begin{smallmatrix} 1 & 0 & 0 & 0 & 0 & 0 & 0 & 0 \\ 0 & 1 & 0 & 0 & 0 & 0 & 0 & 0 \\ 0 & 0 & 1 & 0 & 0 & 0 & 0 & 0 \\ 0 & 0 & 0 & 1 & 0 & 0 & 0 & 0 \\ 0 & 0 & 0 & 0 & 1 & 0 & 0 & 0 \\
0 & 0 & 0 & 0 & 0 & 1 & 0 & 0 \\ 0 & 0 & 0 & 0 & 0 & 0 & 0 & 1 \\ 0 & 0 & 0 & 0 & 0 & 0 & 1 & 0 \end{smallmatrix}}.
\]
\paragraph{Measurement} In this paper, we focus on binary measurement statements, which are specified by a pair of linear operators \(\cM = \{M_\TT, M_\FF\}\) satisfying \(M_\TT^\dagger M_\TT + M_\FF^\dagger M_\FF = I\). 
Performing the measurement \(\cM\) on a quantum state \(\rho\) has a standard physical interpretation: for each outcome \(m \in \{\TT, \FF\}\), the probability of observing \(m\) is given by \[p_m \triangleq \Tr(M_m \rho M_m^\dagger).\] When $p_m>0$, the post-measurement state becomes \[\rho_m \triangleq \frac{M_m \rho M_m^\dagger}{p_m}.\]  
By encoding the outcome probabilities into the resulting partial density operator, the measurement operation is defined as
\[
\cE_{m,\bar{q}}(\rho)\triangleq p_m\rho_m=M_m \rho M_m^\dagger,\quad
\mbox{for }m\in\{\TT,\FF\}.
\]

\section{Dirty Qubit Reuse in Quantum Circuits}\label{sec:circuit}
Before delving into the formal semantics and safety guarantees of dirty qubits in quantum programs, we first examine their proper usage in quantum circuit design and optimization.
Within this context, dirty qubits are typically utilized in two principal aspects:
\begin{enumerate}
    \item to temporarily store and propagate intermediate computation results via the toggling trick~\cite{haner2016factoring}; and
    \item once \emph{safely uncomputed}, to reuse idle qubits---including both working qubits and other dirty qubits---thereby reducing circuit width and the overall qubit requirement.
\end{enumerate}
Regarding the first aspect, several studies have investigated how to exploit dirty qubits to design circuits with the desired functionality, such as multi-controlled circuits~\cite{elementary,Gidney-blog} given in Figure~\ref{fig:three-controlled-not} and constant adder circuits~\cite{haner2016factoring}. 

In this section, we do not address how to construct such circuits; instead, we focus on the second aspect---considering, from a general perspective, what constitutes the \emph{safe uncomputation} of dirty qubits, and presenting the \emph{reuse} of qubits to reduce resource overload once dirty qubits are safely uncomputed.
As a prior, the formal definition of safe uncomputation for dirty qubits in quantum circuits is given below.
\begin{definition}[Safe Uncomputation of Dirty Qubits in Quantum Circuits~\cite{haner2016factoring}]\label{def:safe-uncomp-circuit}
    For a quantum circuit implementing the unitary transformation specified by the unitary operator \(U\), we say that a dirty qubit \(q\) is \emph{safely uncomputed} (in the circuit) if and only if
    \[
    U = V \otimes I_q \quad \text{for some unitary } V,
    \]
    where \(I_q\) denotes the identity operator on qubit \(q\).
\end{definition}

The definition states that safe uncomputation of a dirty qubit in a quantum circuit requires the circuit's unitary operator to act as the identity on that qubit, which is evidently reasonable from its equivalence to a circuit where the dirty qubit does not participate, as illustrated in Figure~\ref{fig:three-controlled-not}.

\begin{example}\label{eg:CCCNOT}
Consider the three-controlled NOT gate shown in Figure~\ref{fig:three-controlled-not}. It can be verified that 
\[
\begin{aligned}
    \gateCCCNOT_{q_1,q_2,q_3,q_4}\otimes I_{a}=\ &\gateToffoli_{a,q_3,q_4}\otimes I_{q_3,q_4}\cdot\\
    &\gateToffoli_{q_1,q_2,a}\otimes I_{q_1,q_2}\cdot\\
    &\gateToffoli_{a,q_3,a_4}\otimes I_{q_3,q_4}\cdot\\
    &\gateToffoli_{q_1,q_2,a}\otimes I_{q_1,q_2},
\end{aligned}
\]
where \(\gateCCCNOT\) denotes the three-controlled NOT gate.
Therefore, \(a\) is safely uncomputed as a dirty qubit in the circuit.
\end{example}

\begin{figure*}[t]
    \centering
    \begin{minipage}{0.38\linewidth}
        \centering
        \includegraphics[width=\linewidth]{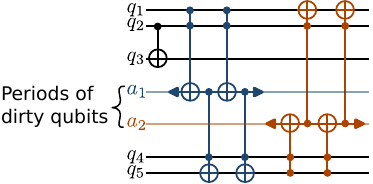}
        \subcaption{Circuit before borrowing dirty qubits.}
        \label{fig:borrowing-dirty-qubits-1}
    \end{minipage}
    \hspace{2.5em}
    \begin{minipage}{0.25\linewidth}
        \centering
        \includegraphics[width=\linewidth]{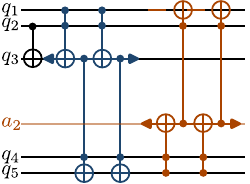}
        \subcaption{Borrowing \(q_3\) as dirty qubit \(a_1\).}
        \label{fig:borrowing-dirty-qubits-2}
    \end{minipage}
    \hspace{2.5em}
    \begin{minipage}{0.25\linewidth}
        \centering
        \includegraphics[width=\linewidth]{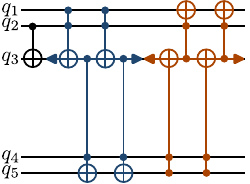}
        \subcaption{Borrowing \(q_3\) as dirty qubit \(a_2\).}
        \label{fig:borrowing-dirty-qubits-3}
    \end{minipage}
    \caption{Reducing the number of required qubits through borrowing dirty qubits.}
    \label{fig:borrowing-dirty-qubits}
\end{figure*}
For a dirty qubit \(q\), any other working qubit that remains \emph{idle} during the periods in which \(q\) is involved in the circuit and is safely uncomputed can be \emph{borrowed} to serve as \(q\). 
This eliminates \(q\) from the circuit and reduces its width. 
Moreover, unlike clean-qubit reuse, which requires verifying that the intermediate state is in the ground state, dirty-qubit reuse only requires checking the idleness of the source qubit, enabling multiple dirty qubits to be reused more freely and flexibly as the same working qubit.  
This idea is illustrated in the following example.

Consider the circuit in Figure~\ref{fig:borrowing-dirty-qubits-1}, which involves five working qubits \(q_1, q_2, q_3, q_4, q_5\) and two dirty qubits \(a_1, a_2\), implementing the routines from Figure~\ref{fig:three-controlled-not}. 
Different colors distinguish the two instances of Figure~\ref{fig:three-controlled-not} and the gates acting on the dirty qubits. 
The symbol \mbox{(\(\blacktriangleleft\!\!\!\rule[0.5ex]{0.7cm}{0.4pt}\!\!\!\blacktriangleright\))} indicates the period of each dirty qubit, during which \(a_1,a_2\) are involved in the circuit and are safely uncomputed.

During the period of \(a_1\), working qubit \(q_3\) remains idle, whereas the others are engaged in \(\gateToffoli\) gates. 
Therefore, \(q_3\) can be borrowed as the dirty qubit \(a_1\), yielding the circuit in Figure~\ref{fig:borrowing-dirty-qubits-2}. 
Similarly, since \(q_3\) is also idle during the lifetime of \(a_2\), it can be borrowed as \(a_2\) to further reduce the circuit width. 
As shown in Figure~\ref{fig:borrowing-dirty-qubits-3}, this optimization implements the same functionality of Figure~\ref{fig:borrowing-dirty-qubits-1} using only five working qubits, without any additional ancillary qubits. 

However, if \(a_1\) and \(a_2\) are clean ancillary qubits, requiring the source qubit to be in the ground state, then \(q_3\) becomes ineligible for reuse. 
This occurs because, even when \(q_3\) is initialized as a fresh clean qubit in the ground state---as typically required of source qubits in clean qubit recycling~\cite{qubit-recycling}---the leftmost \(\gateCNOT\) gate alters its state, preventing it from serving as clean qubits \(a_1\) or \(a_2\).


Reusing qubits by borrowing them as dirty qubits offers an advantage distinct from recycling clean qubits~\cite{qubit-recycling}, as it only requires the source qubit to be idle during the dirty qubits' periods, without additional computation on whether the source qubit is in the ground state before borrowing---a necessary condition for clean qubit reuse.
Consequently, dirty qubit reuse can be managed in a more general and flexible fashion by inspecting the circuit diagram, rather than by manually identifying available source qubits when designing the circuit. 
This property enables the reuse of temporarily idle working qubits as dirty qubits in complex multi-module quantum algorithms, such as factoring~\cite{haner2016factoring,gidney2018factoringn2cleanqubits} and unitary synthesis~\cite{unitarysynthesis}, thereby reducing the overall number of ancillary qubits required.
Moreover, in certain cryptographic scenarios, dirty qubits can be leveraged to minimize T-depth (or width) while still achieving some optimization of width (or T-depth)~\cite{crypt}.

\section{A Quantum Programming Language with Borrowing Dirty Qubits}\label{sec:proglang}


In this section, we elevate the concept of dirty qubit borrowing to the program level, treating it as a built-in feature of quantum programming languages.  
From a programmatic perspective, the behavior of borrowing dirty qubits in quantum programs differs from that in circuit design and optimization in the following aspects:
\begin{enumerate}
    \item Unlike in circuits, quantum programs with more complex control flow make it less straightforward to determine the active periods of dirty qubits. 
    A more programmatic approach is therefore to explicitly delineate the \emph{lifetime} of each borrowed qubit using the syntactic construct \(\tborrow\ a; \ldots; \trel\ a\), analogous to the way variable lifetimes are delineated in classical programming languages; and
    \item Instead of assuming safe uncomputation a priori, the semantics of the \(\tborrow\ a\) statement captures all possible behaviors of borrowing dirty qubits in practice---including unsafe cases. Within this semantic framework, we separately discuss the meaning of safe uncomputation of dirty qubits in quantum programs in subsequent sections.
\end{enumerate}

Formally, we extend the quantum programming language \textsf{QWhile}~\cite{qhoare, Ying24} to \textsf{QBorrow}, introducing dedicated statements for borrowing and releasing dirty qubits.  
We begin by presenting the formal syntax of \textsf{QBorrow}, followed by its denotational semantics.

\subsection{Formal Syntax}

Building on the discussion in Section~\ref{sec:background}, we use 
\[
    \qubits=\{q_1,q_2,...,q_n\}
\] 
to denote the finite set of all available logical qubits (in a quantum computer).
The syntax of \textsf{QBorrow} extends \textsf{QWhile} by introducing \(\tborrow\) statements and is formally defined by the grammar given in Figure~\ref{fig:syntax}.\footnote{The keyword \(\tborrow\) is the same as in Microsoft's Q\#~\cite{qsharp,qsharpmemory}.}

\begin{figure}[t]
    \centering
    \[
    \begin{aligned}
    \mbox{Statement }S&::=&&\tskip\\
    &\mid && [q] := |0\> \\
    & \mid && U[\bar{q}] \\
    & \mid && S_1; S_2 \\
    & \mid && \tif\ \cM[\bar{q}]\ \tthen\ S_1 \telse\ S_2\\
    & \mid && \twhile\ \cM[\bar{q}]\ \tdo\ S\ \tend\\
    & \mid && \tborrow\ a;S;\trel\ a \\
    \end{aligned}
    \]
    \caption{Syntax of \textsf{QBorrow}.}
    \label{fig:syntax}
\end{figure}

Here, each \(q\) and \(\bar{q}\) in the statements denotes a qubit or a list of distinct qubits, respectively, where a qubit may be either a concrete element of \(\qubits\) or a formal placeholder \(a\) introduced by the \(\tborrow\ a\) statements discussed in the next paragraph.
Since each unitary operator \(U\) and measurement \(\mathcal{M}\) acts on a fixed number of qubits, called its \emph{domain}, we assume that the size of \(\bar{q}\) matches the domain of \(U\) or \(\mathcal{M}\) in the constructs \(U[\bar{q}]\) and \(\mathcal{M}[\bar{q}]\), ensuring the well-formedness of the program.

In a \(\tborrow\) statement, the symbol \(a\) is not a concrete qubit from \(\qubits\), but rather a formal placeholder representing an unspecified qubit to be instantiated at runtime. 
Like a concrete qubit, it can serve as the operand of initialization, unitary, and measurement statements.
At the syntactic level, each \(\tborrow\ a\) must be paired with a matching \(\trel\ a\), thereby explicitly delimiting the lifetime of the borrowed qubit. 
For simplicity, we restrict our attention to programs in which every reference to a formal placeholder \(a\) appears within the scope of a corresponding \(\tborrow\ a\) statement, and nested \(\tborrow\) statements introduce distinct placeholders.
\begin{figure}[t]
    \[
    \begin{aligned}
        &\idle(\tskip) &&= \qubits,\\
        &\idle([q]:=|0\>) &&= \qubits\setminus\{q\},\\
        &\idle(U[\bar{q}]) &&= \qubits\setminus\bar{q},\\
        &\idle(S_1; S_2) &&= \idle(S_1)\cap \idle(S_2),\\
        &\idle(\tif\ \cM[\bar{q}]\ \tthen\ S_1 \telse\ S_2) &&= (\idle(S_1)\cap \idle(S_2))\setminus \bar{q}\\
        &\idle(\twhile\ \cM[\bar{q}]\ \tdo\ S\ \tend) &&= \idle(S)\setminus\bar{q},\\
        &\idle(\tborrow\ a;S;\trel\ a) &&= \idle(S).\\
    \end{aligned}
    \]
    \caption{Definition of idle-qubit scope.}
    \label{fig:idle}
\end{figure}

For a program statement \(S\), we denote by \(\idle(S) \subseteq \qubits\) the set of idle qubits during the execution of \(S\)---that is, the qubits not involved in any statement of \(S\).
This set is formally defined by structural induction on \(S\), as shown in Figure~\ref{fig:idle}, and specifies the qubits available for borrowing by a \(\tborrow\) statement. 
In the next subsection, we formalize the behavior of the \(\tborrow\) statement within the denotational semantics, where it is governed by the availability of idle qubits during program execution.

\subsection{Denotational Semantics}
We now formally define the denotational semantics of \textsf{QBorrow} programs. 
As one can imagine, executing a borrow statement involves nondeterministic choices of instantiations of formal placeholders. 
This makes the semantics of \textsf{QBorrow} a nontrivial generalization of \textsf{QWhile}. 

Indeed, providing a well-founded semantics for probabilistic programming languages with nondeterministic choices is a long-standing challenge~\cite{goy2020combining}; the core difficulty lies in the fundamentally different nature of the two types of choice: probabilistic choices are modeled by distributions and combined via summation over branches, whereas nondeterministic choices are modeled by sets of possible outcomes and combined via set union.  
Moreover, the absence of a distributive law between the powerset monad and the finite distribution monad leads to both conceptual complications and non-uniqueness in the resulting semantics~\cite{goy2020combining}. 
This challenge carries over \textsf{QBorrow} programs due to their probabilistic behaviors.

Nevertheless, as the primary focus of this paper is on borrowing dirty qubits rather than on semantic foundations, we follow the approach of~\cite{feng2023verification} and adopt an intuitive denotational semantics that interprets each program as a set of quantum operations.
The denotational semantics of \textsf{QBorrow} is given in Figure~\ref{fig:denotational}, where \(\sem{S}\) denotes the semantics of program \(S\) as a set of quantum operations acting on the state space \(\cD(\cH_\qubits)\).

\begin{figure}[t]
    \centering
    \[
    \begin{aligned}
        &\sem{\tskip} &&= \{ \cI \},\\
        &\sem{[q]:=|0\>} &&= \{ \cE_{init,q} \},\\
        &\sem{U[\bar{q}]} &&= \{ \cE_{U,\bar{q}} \},\\
        &\sem{S_1;S_2} &&= \{\cE_2\circ \cE_1:\cE_1\in \sem{S_1},\cE_2\in\sem{S_2}\},\\
        &\sem{\tif\ \cM[\bar{q}]\ \tthen\ S_1 \telse\ S_2}\hspace{-1000em}&& \\
        &= \{\cE_1\circ\cE_{\TT,\bar{q}}+\cE_2\circ \cE_{\FF,\bar{q}}:\cE_1\in\sem{S_1},\cE_2\in\sem{S_2}\},\hspace{-1000em}&&\\
        &\sem{\twhile\ \cM[\bar{q}]\ \tdo\ S\ \tend}\hspace{-1000em}&&\\
        &=\left \{ \sum_{n=0}^\infty \cE_{\FF,\bar{q}}\circ \bar{\cE}_n\circ \cE_{\TT,\bar{q}}\circ...\circ\bar{\cE}_{1}\circ\cE_{\TT,\bar{q}}:\bar{\cE}\in \sem{S}^\bN\right \}\hspace{-1000em}&&\\
        &\sem{\tborrow\ a;S;\trel\ a} &&=\bigcup_{q\in \idle(S)}\sem{S[q/a]}.\\
    \end{aligned}
    \]
    \caption{Denotational semantics of \textsf{QBorrow}.}
    \label{fig:denotational}
\end{figure}

In Figure~\ref{fig:denotational}, \(\cI\) denotes the identity operation, while \(\cE_{init,q}\), \(\cE_{U,\bar{q}}\), and \(\cE_{m,\bar{q}}\) represent the quantum operations defined in Section~\ref{sec:background}, corresponding respectively to initializing qubit \(q\) to the ground state \(|0\>\< 0|\), applying the unitary \(U\) to qubits \(\bar{q}\), and measuring qubits \(\bar{q}\) with outcome \(m \in \{\TT, \FF\}\).
In addition to the primitive statements \(\tskip\), initialization, and unitary transformations, we highlight the following statements:




\paragraph{Measurement-guarded branchings and loops}
The control flow in \textsf{QBorrow} is inherited from \textsf{QWhile}~\cite{qhoare,Ying24} and adheres to the principle of ``classical control, quantum data'' proposed by \citet{selinger}. 
The \(\tif\) statement is guarded by a binary measurement \(\cM[\bar{q}]\), with the measurement outcomes \(\TT\) and \(\FF\) determining which branch is executed. 

To account for all possible nondeterministic choices in \(\tif\) statements, we take the union over all schedulers that resolve the nondeterminism.
A scheduler for the \(\tif\) statement is determined by a pair of nondeterministic executions for the two branches, \(\cE_1 \in \sem{S_1}\) and \(\cE_2 \in \sem{S_2}\); thus, we take the union over all such pairs \(\cE_1, \cE_2\).

In contrast, the operations performed in different branches, \(\cE_1 \circ \cE_{\TT,\bar{q}}\) and \(\cE_2 \circ \cE_{\FF,\bar{q}}\), are combined by summation, reflecting the probabilistic nature of the program.
This contrast highlights a fundamental difference: nondeterministic choices are modeled by set union, while probabilistic choices are captured through convex combination of branches.

The \(\twhile\) statement is also guarded by a binary measurement \(\cM[\bar{q}]\), and its semantics is defined as the union over all \(\bar{\cE} \in \sem{S}^{\mathbb{N}}\), where \(\sem{S}^{\mathbb{N}}\) denotes the set of all infinite sequences of quantum operations in \(\sem{S}\), interpreted as schedulers that resolve the nondeterministic choices in each iteration of the loop body \(S\).

Mathematically, the summation series in the semantics of the \(\twhile\) statement converges because the set of quantum operations forms a \emph{complete partial order} with respect to the relation defined by
\[
\cE_1 \sqsubseteq \cE_2 \stackrel{\triangle}{\iff} \cE_2 - \cE_1 \text{ is a completely positive map},
\]
and it can be shown that the prefix sums
\[
\sum_{i=0}^n \cE_{\FF,\bar{q}} \circ \bar{\cE}_n \circ \cE_{\TT,\bar{q}} \circ \cdots \circ \bar{\cE}_1 \circ \cE_{\TT,\bar{q}}, \quad n = 0, 1, 2, \ldots
\]
form a non-decreasing sequence in \(n\).

\paragraph{Nondeterministically borrowing dirty qubits with explicit lifetime}
In the semantics of \(\tborrow\) statements, a qubit \(q \in \idle(S)\) is nondeterministically selected from the set of available idle qubits, and the program \(S\) is executed with \(q\) substituted for the formal placeholder \(a\), denoted \(S[q/a]\).  
To combine all possible nondeterministic choices of \(q\), we take the union over all \(q \in \idle(S)\) in the semantics of \(\tborrow\).
The \(\tborrow\) statement constitutes the source of nondeterminism in \textsf{QBorrow} programs. 
In particular, if \(\idle(S) = \emptyset\), meaning no idle qubits are available, then the semantics of \(\tborrow\) is the empty set, indicating that the program becomes \emph{stuck}.

\begin{figure}[t]
    \[
    \begin{aligned}
            S\equiv\quad & \gateCNOT[q_2,q_3];\\
            &\tborrow\ a_1;\\
            & S_1\begin{cases}
                & \gateToffoli[q_1,q_2,a_1];\gateToffoli[a_1,q_4,q_5];\\
                & \gateToffoli[q_1,q_2,a_1];\gateToffoli[a_1,q_4,q_5];\\
                &\tborrow\ a_2;\\
                & S_2
                    \begin{cases}
                        & \gateToffoli[q_4,q_5,q_2];\gateToffoli[a_2,q_2,q_1];\\
                        & \gateToffoli[q_4,q_5,q_2];\gateToffoli[a_2,q_2,q_1];
                    \end{cases}\\
                & \trel\ a_2;\\
            \end{cases}\\
            & \trel\ a_1.\\
    \end{aligned}
    \]
    \caption{\textsf{QBorrow} program for Figure~\ref{fig:borrowing-dirty-qubits-1} with nested \(\tborrow\) statements.}
    \label{fig:nested-borrow}
\end{figure}

As a programming-language construct, the \(\tborrow\) statement naturally permits nesting, reflecting the overlapping lifetimes of borrowed qubits.
In our semantics, borrowed qubits are drawn from \(\idle(S)\), a syntactically defined set, allowing nested \(\tborrow\) statements to borrow possibly the same qubit, which is consistent with practical implementations, as exemplified below.

Consider the program in Figure~\ref{fig:nested-borrow}, corresponding to an implementation of the circuit in Figure~\ref{fig:borrowing-dirty-qubits-1}, where \(S_1\) and \(S_2\) denote the subprograms executed during the lifetimes of \(a_1\) and \(a_2\), respectively. 
From a practical execution standpoint, working qubit \(q_3\) may be sequentially borrowed as dirty qubits \(a_1\) and \(a_2\), as shown in Figures~\ref{fig:borrowing-dirty-qubits-2} and \ref{fig:borrowing-dirty-qubits-3}.
This behavior is consistent with both the denotational semantics and the syntactic definition of idle qubits, given five working qubits \(\qubits = \{q_1, q_2, q_3, q_4, q_5\}\) in our semantics.
Since
\[
\idle(S_1) = \{q_3\} \quad \text{and} \quad \idle(S_2[q_3/a_1]) = \{q_3\},
\]
according to the denotational semantics, we have
{\small
\[
\begin{aligned}
    \sem{S}
    &=\left \{\cE \circ \cE_{\gateCNOT,q_2,q_3} : \cE \in \bigcup\nolimits_{q\in \idle(S_1)}\sem{S_1[q/a_1]}\right \}\\
    &=\{\cE \circ \cE_{\gateCNOT,q_2,q_3} : \cE \in \sem{S_1[q_3/a_1]}\} \\
    &=\left \{\cE \circ \cE_1 : \cE \in \bigcup\nolimits_{q\in \idle(S_2[q_3/a_1])}\sem{S_2[q/a_2]}\right \}\\
    &=\{\cE \circ \cE_1 : \cE \in \sem{S_2[q_3/a_1,q_3/a_2]}\} \\
    &=\{\cE_2\},
\end{aligned}
\]}

\noindent where \(\cE_1\) denotes the composition of the \(\gateCNOT\) gate and the first four \(\gateToffoli\) gates, and \(\cE_2\) denotes the final unitary transformation implemented by the circuit.

Therefore, these five qubits suffice to ensure that the program does not get stuck under the denotational semantics of \(\tborrow\) statements, faithfully reflecting the practical behavior of borrowing dirty qubits with overlapping lifetimes.

\section{Safe Uncomputation of Dirty Qubits}\label{sec:uncomp}
As a natural extension of Definition~\ref{def:safe-uncomp-circuit}, we now define the notion of safe uncomputation for dirty qubits in quantum programs.

\begin{definition}[Safe Uncomputation]\label{def:safe-uncomp}
    For a \textsf{QBorrow} program \(S\) and a qubit \(q\in \qubits\), we say that \(S\) \emph{safely uncomputes} \(q\) if and only if for all \(\cE\in \sem{S}\),
    \[
    \cE=\cI_q\otimes \cE'\mbox{ for some }\cE'\mbox{ acting on } \qubits\setminus\{q\}.
    \]
    If program \(S\) safely uncomputes qubit \(q\), then \(q\) can be substituted with a dirty qubit, and the \(\tborrow\) statement in
    \[
    \tborrow\ a;\ S[a/q];\ \trel\ a
    \]
    is said to be \emph{safe}.
\end{definition}
On the one hand, this definition is sufficiently strong to capture the essence of safe uncomputation: without inspecting intermediate states, no experimental or physical means can distinguish whether the dirty qubit participated in the execution, since the program acts as the identity on it. 
We will later demonstrate that this is equivalent to preserving the external entanglement of the dirty qubit, thereby validating the robustness of the definition in guaranteeing its safe borrowing from external systems.
On the other hand, the definition is not overly restrictive, as we will also show that it coincides with the requirement that the program restore arbitrary pure initial states---a natural and reasonably mild condition.

Moreover, we define safe uncomputation with respect to individual dirty qubits rather than entire programs, as a dirty qubit \(q\) may still be safely uncomputed even if other borrowed qubits exist that are not safely uncomputed---an idea exemplified below.
\begin{example}
    Consider the program
    \[
    S \equiv \gateX[q];\tborrow\ a;\ \gateX[q];\gateX[a];\ \trel\ a.
    \]
    Although \( \tborrow a \) is unsafe---since only an \( \gateX \) gate is applied to \( a \)---the qubit \( q \) is still safely uncomputed by \( S \). Therefore, \( q \) can still be substituted with a dirty qubit, despite the unsafe \( \tborrow a \) during its lifetime.
\end{example}

This allows the safe uncomputation of different dirty qubits to be verified individually, thereby expanding the set of safely borrowable qubits and enabling more efficient reuse of quantum resources. 

Next, we provide equivalent reinterpretations of safe uncomputation from three complementary perspectives: state restoration, entanglement preservation, and nondeterminism elimination.

\paragraph{State Restoration}
As discussed in Section~\ref{sec:introduction}, restoring only two specific states, i.e., \(|0\rangle\) and \(|1\rangle\), is insufficient to guarantee safe uncomputation, even for circuits that implement elementary classical functions.

To complement this, we further strengthen the state restoration requirement by demanding that the program restore arbitrary pure states.
This aligns with the common intuition of uncomputation: in practice, ancillary qubits are often initialized in specific pure states, 
and the circuit is expected to restore them to their initial states in a manner independent of the particular initialization.
We formally establish the equivalence between the strengthened notion of state restoration and safe uncomputation by the following theorem.
\begin{theorem}\label{thm:state-restoration}
    A program \(S\) safely uncomputes a qubit \(q\) if and only if for all states \(\rho\) and any one-qubit pure state \(|\psi\>\<\psi|\),
    \begin{equation}\label{eq:state-restoration}
    \rho|_q=|\psi\>\<\psi| \implies \forall \cE \in \sem{S}, \  \cE(\rho)|_q = |\psi\>\<\psi|.
    \end{equation}
\end{theorem}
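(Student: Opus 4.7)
The proof splits into two directions. The forward direction is easy: if $\cE = \cI_q\otimes \cE'$ and $\rho|_q = |\psi\>\<\psi|$ for some pure $|\psi\>$, then a standard application of Schmidt decomposition to any purification of $\rho$ shows that $\rho$ factorises as $|\psi\>\<\psi|_q\otimes \sigma$ for some density operator $\sigma$ on $\qubits\setminus\{q\}$. Computing $\cE(\rho) = |\psi\>\<\psi|_q\otimes \cE'(\sigma)$ and tracing out the complement yields $\cE(\rho)|_q = \Tr(\cE'(\sigma))\cdot |\psi\>\<\psi|$, which equals $|\psi\>\<\psi|$ provided $\cE'$ preserves the trace of $\sigma$. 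Trace preservation on the relevant inputs can itself be extracted from the fact that the safe-uncomputation hypothesis forces $\Tr(\cE(\rho))=1$ on every product input of this form.

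For the reverse direction, which carries the main content, my plan is to fix a Kraus decomposition $\cE(\rho) = \sum_k K_k\rho K_k^\dagger$ and prove that each $K_k$ has the form $I_q\otimes M_k$ for some operator $M_k$ acting on $\cH_{\qubits\setminus\{q\}}$; the decomposition $\cE = \cI_q\otimes \cE'$ with $\cE'(\sigma) = \sum_k M_k\sigma M_k^\dagger$ then follows immediately. I would first specialise the hypothesis to pure product inputs $|\psi\phi\>\<\psi\phi|$ and write the output as a convex mixture of rank-one PSD operators $K_k|\psi\phi\>\<\psi\phi|K_k^\dagger$. Partially tracing over the complement exhibits $|\psi\>\<\psi|$ as a convex combination of normalised reduced states, weighted by $p_k = \|K_k|\psi\phi\>\|^2$.

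The extremality of pure states in $\cD(\cH_q)$ then forces every nonzero component to coincide with $|\psi\>\<\psi|$, so each vector $K_k|\psi\phi\>$ is pure with pure reduced state on $q$; Schmidt decomposition again yields the product form $K_k|\psi\phi\> = |\psi\>\otimes |\eta_k(\psi,\phi)\>$ for some vector $|\eta_k(\psi,\phi)\>$ on the complement. To eliminate the $|\psi\>$-dependence of $|\eta_k\>$ I would expand $|\psi\> = \alpha|0\> + \beta|1\>$, apply linearity of $K_k$, and match coefficients on the linearly independent basis vectors of $\cH_q$; this collapses the candidates $|\eta_k(0,\phi)\>$, $|\eta_k(1,\phi)\>$ and $|\eta_k(\psi,\phi)\>$ to a single vector $|\eta_k(\phi)\>$ depending only on $|\phi\>$. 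Bilinear extension from pure product vectors then establishes $K_k = I_q\otimes M_k$ on the whole state space.

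The main obstacle I anticipate is precisely this passage from the statistical hypothesis (preservation of the reduced state on $q$ for every product input) to the structural conclusion (each individual Kraus operator factorises through the identity on $q$); the extremality-plus-Schmidt argument sketched above is what effects this bridge, and everything else---linearity, coefficient matching, bilinear extension, and the implicit reduction from trace-non-increasing to trace-preserving $\cE$---is essentially mechanical.
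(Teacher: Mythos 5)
Your proposal is correct, but your reverse direction takes a genuinely different route from the paper's. The paper invokes a Stinespring dilation: it writes \(\cE(\sigma)=\Tr_{\cH'}\pare{A\sigma A^\dagger}\) for a \emph{single} operator \(A\) into an enlarged space \(\cH_q\otimes\cH\otimes\cH'\), so that for each product input \(|\psi\>|\varphi\>\) the global output \(A|\psi\>|\varphi\>\) is one pure vector whose reduced state on \(q\) is \(|\psi\>\<\psi|\); purity plus the Schmidt decomposition then gives \(A|\psi\>|\varphi\>=|\psi\>|\phi\>\) directly, and the same linearity/coefficient-matching you use yields \(A=I_q\otimes B\), whence \(\cE'(\rho)=\Tr_{\cH'}(B\rho B^\dagger)\). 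Your Kraus-based version replaces the dilation by an extremality argument: since \(\sum_k K_k|\psi\phi\>\<\psi\phi|K_k^\dagger\) reduces on \(q\) to the pure state \(|\psi\>\<\psi|\), and pure states are extreme points of \(\cD(\cH_q)\), each component with \(p_k>0\) must individually reduce to \(|\psi\>\<\psi|\), after which your argument runs in parallel to the paper's, operator by operator. The two are essentially equivalent in depth---your extremality lemma is precisely what the dilation renders unnecessary, because Stinespring packages all Kraus components into one pure vector---but yours is more self-contained (no dilation theorem needed), at the cost of the convexity step and some bookkeeping over the index \(k\), including the degenerate cases \(K_k|\psi\phi\>=0\), which your coefficient matching does handle correctly. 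One small correction to your forward direction: you neither need nor can in general obtain trace preservation of \(\cE'\)---the semantics in this paper are trace non-increasing (e.g., nonterminating loops), so the safe-uncomputation hypothesis does \emph{not} force \(\Tr(\cE(\rho))=1\). No repair is required, though: the paper defines \(\rho|_q\) with normalization by \(\Tr(\rho)\), so \(\cE(\rho)|_q=\Tr(\cE'(\sigma))\,|\psi\>\<\psi|/\Tr(\cE'(\sigma))=|\psi\>\<\psi|\) automatically whenever \(\Tr(\cE(\rho))>0\), exactly as in the paper's (\(\Rightarrow\)) argument.
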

Here, \(\rho|_q\) denotes the reduced state of \(\rho\) on qubit \(q\), obtained by tracing out all other qubits:
\[
\rho|_q \triangleq \frac{\Tr_{\qubits \setminus \{q\}}(\rho)}{\Tr(\rho)}.
\]
The implication in~\eqref{eq:state-restoration} asserts that if the initial state of \(q\) is \(|\psi\>\<\psi|\), then its final state after executing \(S\) is also \(|\psi\>\<\psi|\). 
The theorem shows that the definition of safe uncomputation is not unduly restrictive, being equivalent to restoring arbitrary pure states---a natural requirement for the safe uncomputation of dirty qubits.


\paragraph{Entanglement Preservation}
Intuitively, safely borrowing and uncomputing a dirty qubit from an external system requires, at a minimum, preserving the entanglement between the dirty qubit and the external system to which it originally belonged---referred to as \emph{external entanglement}---since otherwise, borrowing the dirty qubit could affect the external computation. 
Next, we show, via the following theorem, that safe uncomputation is in fact equivalent to preserving any external entanglement.
\begin{theorem}\label{thm:entanglement-preservation}
    A program \(S\) safely uncomputes a qubit \(q\) if and only if, for some additional hypothetical qubits \(\bar{q}' \notin \qubits\), any state \(\rho\) of \(\qubits \cup \bar{q}'\) and any entangled state \(\rho'\)
    \begin{equation}\label{eq:entanglement-preservation}
        \begin{aligned}
    &\rho|_{q,\bar{q}'}=\rho' \implies \forall \cE \in \sem{S}, \  (\cE \otimes \cI_{\bar{q}'})(\rho)|_{q,\bar{q}'}=\rho'.
        \end{aligned}
    \end{equation}
\end{theorem}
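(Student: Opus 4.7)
The plan is to prove both implications separately. The forward direction ($\Rightarrow$) will be a direct computation exploiting the factored form $\cE=\cI_q\otimes\cE'$ granted by safe uncomputation. The backward direction ($\Leftarrow$), which I expect to be the main obstacle, would witness the existential ``for some $\bar{q}'$'' by taking $\bar{q}'$ to be a disjoint copy of $\qubits$ and then use a single maximally entangled probe state, together with the Choi--Jamio\l{}kowski isomorphism, to recover the tensor factorization of every $\cE\in\sem{S}$ from the preservation of just one entangled reduced state.

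For the forward direction, I would fix any auxiliary system $\bar{q}'$, any $\rho$ on $\cH_\qubits\otimes\cH_{\bar{q}'}$ with $\rho|_{q,\bar{q}'}=\rho'$, and any $\cE\in\sem{S}$. By assumption $\cE=\cI_q\otimes\cE'$ for some trace-preserving $\cE'$ acting on $\qubits\setminus\{q\}$, so $(\cE\otimes\cI_{\bar{q}'})(\rho)=(\cI_q\otimes\cE'\otimes\cI_{\bar{q}'})(\rho)$. Taking the partial trace over $\qubits\setminus\{q\}$, the trace-preserving property of $\cE'$ causes its action to drop out, yielding $\Tr_{\qubits\setminus\{q\}}[(\cE\otimes\cI_{\bar{q}'})(\rho)]=\Tr_{\qubits\setminus\{q\}}(\rho)$; the two total traces coincide as well, so dividing gives $(\cE\otimes\cI_{\bar{q}'})(\rho)|_{q,\bar{q}'}=\rho|_{q,\bar{q}'}=\rho'$. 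Note that the entanglement hypothesis on $\rho'$ is not required in this direction.

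For the backward direction, I would set $\bar{q}'$ to be a disjoint ``shadow copy'' of $\qubits$ via a bijection $q_i\mapsto q_i'$, let $|\Phi\rangle=2^{-n/2}\sum_{i\in\{0,1\}^n}|i\rangle_\qubits|i\rangle_{\bar{q}'}$, and take $\rho=|\Phi\rangle\langle\Phi|$. Because $|\Phi\rangle$ factorizes over matched qubit pairs into Bell states, $\rho|_{q,\bar{q}'}=|\Phi\rangle\langle\Phi|_{q,q'}\otimes (I/2^{n-1})_{\bar{q}'\setminus\{q'\}}$ is entangled across the cut $q\mid\bar{q}'$; take this as the witness $\rho'$. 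For any $\cE\in\sem{S}$, letting $\sigma=(\cE\otimes\cI_{\bar{q}'})(\rho)$, the hypothesis and a further partial trace give $\sigma|_{q,q'}=|\Phi\rangle\langle\Phi|_{q,q'}$, a \emph{pure} marginal; purity then forces $\sigma=|\Phi\rangle\langle\Phi|_{q,q'}\otimes\tau$ for some positive semidefinite operator $\tau$ on the qubits $\qubits\setminus\{q\}\cup\bar{q}'\setminus\{q'\}$. I would then use the inverse Choi--Jamio\l{}kowski correspondence to define $\cE'$ as the unique CP map with $(\cE'\otimes\cI)(|\Phi\rangle\langle\Phi|_{\qubits\setminus\{q\},\bar{q}'\setminus\{q'\}})=\tau$ and verify
\[
(\cI_q\otimes\cE'\otimes\cI_{\bar{q}'})(|\Phi\rangle\langle\Phi|)=|\Phi\rangle\langle\Phi|_{q,q'}\otimes\tau=\sigma=(\cE\otimes\cI_{\bar{q}'})(|\Phi\rangle\langle\Phi|);
\]
since a CP map is uniquely determined by its image on the maximally entangled state, $\cE=\cI_q\otimes\cE'$, as required.

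The hard part will be the backward direction, which rests on two classical facts that must be invoked precisely: (i) a bipartite state whose reduced state on one side is pure is necessarily a product across that cut, and (ii) a CP map is uniquely identified by its image on one half of the maximally entangled state (Choi--Jamio\l{}kowski). A subsidiary subtlety is verifying that the derived $\tau$ is positive semidefinite (automatic, since $\tau$ arises as a reduced state of $\sigma$) so that the reconstructed $\cE'$ is a legitimate CP map, and correctly tracking the normalization when passing between density operators and their Choi images.
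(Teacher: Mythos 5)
Your proposal is correct, and your forward direction is essentially the paper's; your backward direction, however, takes a genuinely different route---and is in fact more complete than the paper's own proof of this theorem, since the deferred proof of Theorem~\ref{thm:entanglement-preservation} contains \emph{only} the \((\Rightarrow)\) computation, with the \((\Leftarrow)\)-type argument appearing only later, inside the proof of Theorem~\ref{thm:refined} (\(3\implies 1\)). There, the paper fixes a \emph{single} hypothetical qubit \(q'\), probes each \(\cE\in\sem{S}\) with the product states \(|i\>\<i|\otimes|\Phi\>_{q,q'}\<\Phi|\) over a basis \(\{|i\>\}\) of the remaining qubits, and argues at the level of a Kraus representation: each vector \((E_k\otimes I_{q'})\,|i\>\otimes(|00\>+|11\>)_{q,q'}\) must lie in the support of \(\rho'\otimes|\Phi\>\<\Phi|\), hence factors as \(|\beta_{i,k}\>\otimes(|00\>+|11\>)_{q,q'}\), and multiplying by \(\<j|_{q'}\) yields \(E_k=E_k'\otimes I_q\). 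You instead take a full shadow copy of \(\qubits\), probe once with the global maximally entangled state, factor the output via purity of the \((q,q')\) marginal, and conclude by injectivity of the Choi--Jamio\l{}kowski correspondence. Your version is representation-independent and conceptually cleaner as a one-shot argument; the paper's version buys something you should not overlook: its probes are products of computational-basis states with one Bell pair on a single extra qubit, which is exactly the finite, local instantiation needed for Theorem~\ref{thm:refined} and the downstream verification algorithm, whereas your global-MES argument would require an extra step to recover that refinement.

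Two caveats, both shared with the paper rather than unique to you. First, in \((\Rightarrow)\) you assume \(\cE'\) is trace-preserving, which Definition~\ref{def:safe-uncomp} does not grant (quantum operations here are merely trace non-increasing); the paper's computation makes the same silent assumption when the factor \(\Tr(\cE'(\rho^{(i)}))\) disappears from the normalization, and for state-dependent trace loss (e.g.\ a possibly nonterminating loop measuring a third qubit correlated with \(q,\bar{q}'\)) the normalized reduced state can genuinely move, so the implication is only clean under trace preservation or uniform trace scaling. Second, you choose to witness ``for some \(\bar{q}'\)'' with the full copy, so strictly you prove the converse only for that particular auxiliary register; this mirrors the paper's own instantiation \(\bar{q}'=q'\) in Theorem~\ref{thm:refined} and is acceptable given the theorem's loose quantifier, but a pedantic reading would demand the converse for an arbitrary nonempty \(\bar{q}'\), for which the paper's single-qubit probe argument is the one that directly applies.
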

The implication in~\eqref{eq:entanglement-preservation} characterizes safe uncomputation as the preservation of external entanglement among \(q\) and \(\bar{q}'\) throughout the executions of \(S\), ensuring that \(q\) can be safely borrowed from external systems.  
Due to the presence of the \(\tborrow\ a\) statement, a program \(S\) may operate on qubits beyond those explicitly referenced within its code.  
To account for this, we introduce hypothetical qubits \(\bar{q}'\) that lies outside the scope of \(\qubits\), ensuring that \(S\) cannot act on \(\bar{q}'\).

\paragraph{Nondeterminism Elimination} 
Intuitively, the safety of a \(\tborrow\ a\) statement means that it is indistinguishable which specific qubit is borrowed as \(a\), since the program acts as the identity on the borrowed qubit, thereby eliminating the nondeterminism arising from instantiating \(a\) with different qubits.

Beyond reasoning about individual qubits, it is also important to consider the safety of an entire program.
A program \(S\) is \emph{safe} if all dirty qubits borrowed within \(S\) are safely uncomputed.
The following theorem shows that a program is safe if and only if it is equivalent to a deterministic program, eliminating all nondeterminism introduced by its \(\tborrow\) statements.
\begin{theorem}\label{thm:safe-program}
    A \textsf{QBorrow} program \(S\) is safe if and only if \(S\) is equivalent to a deterministic program.
    Formally,
    \[
    S\mbox{ is safe}\iff|\sem{S}| \leq 1\mbox{ in arbitrarily large }\qubits.
    \]
\end{theorem}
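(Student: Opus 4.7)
The plan is to prove both directions by structural induction on $S$, with essentially all of the nontrivial work concentrated in the $\tborrow$ case, since every other construct preserves the cardinality of the semantics in the obvious way. The key technical lemma underlying both directions is a symmetry property: for any subprogram $S'$ appearing inside $\tborrow\ a;\,S';\,\trel\ a$ and any two qubits $q_1,q_2\in\idle(S')$, the sets $\sem{S'[q_1/a]}$ and $\sem{S'[q_2/a]}$ are related by the permutation of $\qubits$ that swaps $q_1$ and $q_2$ and fixes everything else. I would establish this by a separate induction on $S'$, using the fact that $S'$ does not refer to $q_1$ or $q_2$ outside of its use of $a$.

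For the forward direction I would strengthen the induction hypothesis to say that if $S$ is safe then $\sem{S}=\{\cI_{\idle(S)}\otimes\cF_S\}$ for some operation $\cF_S$ on $\qubits\setminus\idle(S)$ that depends only on the syntax of $S$. The primitive, sequential, conditional, and while cases are routine. In the $\tborrow\ a;\,S';\,\trel\ a$ case, the strengthened IH gives $\sem{S'[q/a]}=\{\cI_{\idle(S')\setminus\{q\}}\otimes\cF_{S'[q/a]}\}$ for each idle $q$; safe uncomputation of $a$ further factors $\cF_{S'[q/a]}=\cI_q\otimes\cG_q$; and the symmetry lemma forces $\cG_q$ to be independent of $q$, so the union over $q$ collapses to a single element of the required form.

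For the backward direction I would argue the contrapositive. If $S$ is unsafe, some $\tborrow\ a;\,S';\,\trel\ a$ inside $S$ violates safe uncomputation, yielding $q^*\in\idle(S')$ and $\cE^*\in\sem{S'[q^*/a]}$ not of the form $\cI_{q^*}\otimes\cE'$. I would then enlarge $\qubits$ with two fresh qubits $q_1,q_2$ that are idle throughout $S$ (so $q_1,q_2\in\idle(S)\subseteq\idle(S')$). By the symmetry lemma there exist $\cE_i\in\sem{S'[q_i/a]}$ acting non-trivially on $q_i$ exactly as $\cE^*$ acts on $q^*$; since $q_1\in\idle(S'[q_2/a])$ and $q_2\in\idle(S'[q_1/a])$, each $\cE_i$ is the identity on the other witness qubit, so $\cE_1\ne\cE_2$. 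Because $q_1,q_2\in\idle(S)$, every statement of $S$ outside this borrow acts as identity on them, so fixing a consistent schedule for the surrounding context preserves the distinction and produces two distinct elements of $\sem{S}$.

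The main obstacle is the careful handling of sibling and ancestor $\tborrow$ statements in the backward direction: another borrow elsewhere in $S$ could in principle instantiate its own placeholder with one of our witness qubits $q_1$ or $q_2$, re-introducing nondeterminism that might mask the intended distinction. This is exactly where the hypothesis ``arbitrarily large $\qubits$'' is indispensable: by adding sufficiently many fresh qubits, we can reserve $q_1,q_2$ as witnesses while leaving every other borrow ample alternative idle qubits, so a consistent schedule across all other borrows can be chosen that keeps the two distinguishing operations intact in $\sem{S}$. The inductive step of the symmetry lemma itself at a nested $\tborrow$ case also requires care, since the inner borrow's set of idle qubits depends on the enclosing context.
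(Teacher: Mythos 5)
Your proposal is sound and, in the forward direction, essentially reproduces the paper's argument: both proceed by structural induction with the strengthened hypothesis that a safe program has singleton semantics acting as the identity on all qubits outside those it touches, which collapses the union in the $\tborrow$ case. The genuine difference is your explicit swap-equivariance lemma relating $\sem{S'[q_1/a]}$ and $\sem{S'[q_2/a]}$ by conjugation with the transposition of $q_1,q_2$: the paper has no such lemma and at the corresponding point simply asserts that ``the semantics $\sem{S'[q/a]}$ are identical for all $q \in \idle(S')$''; your lemma is precisely the justification that assertion needs, and the nested-borrow care you flag is real, since $\idle$ of the inner scope shifts under the substitution. The difference is starker in the backward direction. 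You prove the contrapositive the theorem actually requires (unsafe $\Rightarrow$ $|\sem{S}|>1$ in suitably enlarged $\qubits$) via fresh witness qubits and the equivariance lemma, whereas the paper's ``$(\Leftarrow)$'' paragraph proves ``$|\sem{S}|>1$ $\Rightarrow$ some borrow is unsafe,'' which is logically the contrapositive of the \emph{forward} direction restated; the implication from determinism back to safety is never argued there, so your witness construction fills a genuine hole rather than merely re-deriving the paper's step. One residual soft spot, which you share with (and which is worse in) the paper: your final claim that a consistent schedule for the surrounding context ``preserves the distinction'' is not automatic, because a context acting as the identity on $q_1,q_2$ can still equalize the two composites --- for instance, if a divergent loop precedes the unsafe borrow, every element of $\sem{S}$ is the zero operation, so $|\sem{S}|=1$ while the program is unsafe under Definition~\ref{def:safe-uncomp}'s per-borrow reading. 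Closing this requires either an argument that the ambient composition does not annihilate the difference between $\cE_1$ and $\cE_2$, or a reachability-sensitive reading of program safety; neither your sketch nor the paper's proof addresses it.
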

As discussed in Section~\ref{sec:proglang}, the multiplicity of \(\sem{S}\) arises from nondeterministic choices in \(\tborrow\) statements. 
Therefore, \(|\sem{S}| \leq 1\) indicates that all such choices are resolved, and the program either gets stuck (\(\sem{S}=\emptyset\)) due to an insufficient number of qubits available for borrowing, or behaves deterministically (\(|\sem{S}|=1\)), reducing to a \textsf{QWhile} program.

\section{Effective Verification of Safe Uncomputation}\label{sec:classical}
In the previous section, Theorems~\ref{thm:state-restoration} and \ref{thm:entanglement-preservation} provide characterizations of safe uncomputation from different perspectives; however, they require inspecting all pure states or entanglement, rendering the verification of safe uncomputation computationally infeasible.

In this section, we first refine these theorems to restrict the verification to a specific finite set, making the process effective. 
We then propose a reduction-to-satisfiability algorithm for circuits in which dirty qubits are commonly used, and implement and evaluate it on circuits with hundreds to thousands of qubits, thereby demonstrating the efficiency of the verification approach.

As a preparation, we introduce a finite set \(\sB\) of one-qubit density operators, forming a basis of the one-qubit state space:
\[
\sB = \{|0\>\<0|, |1\>\<1|, |+\>\<+|, |+_i\>\<+_i|\},
\]
together with one additional one-qubit pure state \(|-\>\), where the states \(|-\>\) and \(|+_i\>\) are defined as:
\[
|-\> = \tfrac{1}{\sqrt{2}}(|0\> - |1\>), \quad |+_i\> = \tfrac{1}{\sqrt{2}}(|0\> + i|1\>).
\]

The following theorem refines Theorems~\ref{thm:state-restoration} and \ref{thm:entanglement-preservation} by showing that, to verify the safe uncomputation of dirty qubits, it suffices to consider only these specific states.

\begin{theorem}[Refined Theorems~\ref{thm:state-restoration} and \ref{thm:entanglement-preservation}]\label{thm:refined}
For a \textsf{QBorrow} program \(S\) and a qubit \(q \in \qubits\), let \(n = |\qubits|\) denote the total number of qubits. Then the following conditions are equivalent:
\begin{enumerate}
    \item \(S\) safely uncomputes \(q\);
    \item \eqref{eq:state-restoration} holds for every \(|\psi\> \in \{|0\>, |1\>, |+\>, |+_i\>, |-\>\}\) and every state 
    \(\rho \in \{\rho' \otimes |\psi\>_q\<\psi| : \rho' \in \sB^{\otimes (n-1)}\}\);
    \item \eqref{eq:entanglement-preservation} holds for one additional hypothetical qubit \(\bar{q}' = q'\), the Bell state \(\rho' = |\Phi\>\<\Phi|\), and for every state \(\rho \in \{\rho' \otimes |\Phi\>_{q,q'}\<\Phi| : \rho' \in \sB^{\otimes (n-2)}\}\).
\end{enumerate}
\end{theorem}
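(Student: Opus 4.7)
The plan is to establish the non-trivial directions $(2) \Rightarrow (1)$ and $(3) \Rightarrow (1)$; the converses $(1) \Rightarrow (2)$ and $(1) \Rightarrow (3)$ are immediate specializations of Theorems~\ref{thm:state-restoration} and~\ref{thm:entanglement-preservation} to the prescribed finite input sets. Three enabling facts are used throughout: (i) any PSD bipartite operator whose marginal on one subsystem is a rank-one projector factorizes as a tensor product with that projector; (ii) $\sB^{\otimes k}$ is a linear basis of the $k$-qubit operator space, so operator identities on basis elements lift by linearity of $\cE$ to arbitrary inputs; and (iii) every $\cE \in \sem{S}$ is CPTP---trace-preservation is verified by induction on the syntactic structure of $S$ in Figure~\ref{fig:denotational}, noting that the measurement branches, though individually trace-non-increasing, combine to trace-preserving maps inside the branching and looping constructs.

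For $(2) \Rightarrow (1)$, fix $\cE \in \sem{S}$. For each $\rho' \in \sB^{\otimes(n-1)}$ and $|\psi\>$ in the five-state list, fact~(i) upgrades the hypothesis to $\cE(\rho' \otimes |\psi\>\<\psi|_q) = \tau_{\rho',\psi} \otimes |\psi\>\<\psi|_q$ for some $\tau_{\rho',\psi}$ on $\qubits \setminus \{q\}$. Expanding $\cE(\rho' \otimes \cdot_q)$ on both sides of the identities $|0\>\<0|+|1\>\<1|=I=|+\>\<+|+|-\>\<-|$ and equating coefficients in the $\{|0\>\<0|,|1\>\<1|,|0\>\<1|,|1\>\<0|\}$ basis of $q$ forces $\tau_{\rho',0}=\tau_{\rho',1}=\tau_{\rho',+}=\tau_{\rho',-}$; denote this common value $\tau_{\rho'}$. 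For $|+_i\>$, the decomposition $|+_i\>\<+_i|=\tfrac12(|0\>\<0|+|1\>\<1|)+\tfrac{i}{2}(|1\>\<0|-|0\>\<1|)$ combined with fact~(iii) yields $\tau_{\rho',+_i}\leq\tau_{\rho'}$ in PSD order (from positivity of $\cE(\rho'\otimes|-_i\>\<-_i|)$, expressed via $|+_i\>\<+_i|+|-_i\>\<-_i|=I$), while $\Tr(\tau_{\rho',+_i})=\Tr(\tau_{\rho'})=\Tr(\rho')$ by trace preservation, forcing equality. Back-substituting then pins down $\cE(\rho'\otimes X_q) = \tau_{\rho'}\otimes X_q$ for every $2{\times}2$ operator $X$ on $q$. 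Fact~(ii) lifts this product structure to arbitrary $\sigma$ on $\qubits\setminus\{q\}$, giving a linear $\cE'$ with $\cE(\sigma \otimes X_q) = \cE'(\sigma) \otimes X$; a final linearity argument on the $q$-factor concludes $\cE = \cI_q \otimes \cE'$.

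For $(3) \Rightarrow (1)$, the Choi--Jamio\l{}kowski representation affords a cleaner reduction. Fact~(i) turns the hypothesis into $(\cE \otimes \cI_{q'})(\mu \otimes |\Phi\>\<\Phi|_{q,q'}) = \eta_\mu \otimes |\Phi\>\<\Phi|_{q,q'}$ for some $\eta_\mu$. Partial tracing against $I_q \otimes X^T_{q'}$, via the elementary identity $\Tr_{q'}\bigl[(Y\otimes|\Phi\>\<\Phi|)(I_q\otimes X^T_{q'})\bigr] = \tfrac12\,Y\otimes X$, directly yields $\cE(\mu \otimes X_q) = \eta_\mu \otimes X_q$ for every operator $X$ on $q$; linearity in $\mu$ via fact~(ii) and linearity in $X$ then produce $\cE = \cI_q \otimes \cE'$.

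The main obstacle is the $|+_i\>$-step in $(2)\Rightarrow(1)$: the anti-Hermitian direction $|1\>\<0|-|0\>\<1|$ is unreachable by any linear combination of $\{|0\>\<0|,|1\>\<1|,|+\>\<+|,|-\>\<-|\}$ and is accessible only through $|+_i\>$; moreover, positivity alone yields only the PSD inequality $\tau_{\rho',+_i}\leq\tau_{\rho'}$, and trace preservation is the crucial extra ingredient that upgrades this inequality to equality. No analogous subtlety arises in $(3)\Rightarrow(1)$, whose Choi-based reduction is essentially one-shot and does not rely on CPTP structure beyond linearity.
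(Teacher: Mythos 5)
Your skeleton for \((2)\Rightarrow(1)\) coincides with the paper's---fact~(i) to upgrade the marginal condition to a tensor-product identity, linearity over the basis \(\sB^{\otimes(n-1)}\), and the relation \(|0\>\<0|+|1\>\<1|=|+\>\<+|+|-\>\<-|\) to force \(\tau_{\rho',0}=\tau_{\rho',1}=\tau_{\rho',+}=\tau_{\rho',-}\)---but your fact~(iii) is false, and your \(|+_i\>\) step leans on exactly the part that fails. In this paper quantum operations are completely positive and trace \emph{non-increasing}, and the semantics of \(\twhile\ \cM[\bar{q}]\ \tdo\ S\ \tend\) sums only over terminating runs: with \(M_\TT=I\) and \(M_\FF=0\), the loop \(\twhile\ \cM[\bar{q}]\ \tdo\ \tskip\ \tend\) denotes the zero operation. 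The paper introduces partial density operators precisely to encode termination probabilities, so your structural induction breaks at the while case (it would require almost-sure termination, which is not assumed), the equality \(\Tr(\tau_{\rho',+_i})=\Tr(\tau_{\rho'})=\Tr(\rho')\) is unavailable, and---as you yourself observe---positivity alone leaves you with only \(\tau_{\rho',+_i}\le\tau_{\rho'}\) in the PSD order, where the argument stalls for a trace-decreasing \(\cE\in\sem{S}\).

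The gap is repairable, but by complete positivity alone, so your closing claim that trace preservation is ``the crucial extra ingredient'' is the one real misstep. Concretely: the hypotheses determine \(\cE\) on all inputs \(\sigma\otimes B\) with \(B\in\cL(\cH_q)\); setting \(\cF:=\cF_0=\cF_1=\cF_+=\cF_-\) and \(\Delta:=\cF_{+_i}-\cF\) (where \(\cF_\psi\) sends basis elements \(\rho'\) to \(\tau_{\rho',\psi}\)), a short expansion gives \(\cE(\sigma\otimes|0\>\<1|)=\cF(\sigma)\otimes|0\>\<1|+i\,\Delta(\sigma)\otimes|+_i\>\<+_i|\). For a fresh qubit \(a\) and PSD \(\sigma\), complete positivity makes \((\cI_a\otimes\cE)\bigl(\sigma\otimes|\Phi\>_{a,q}\<\Phi|\bigr)=\tfrac12\sum_{j,k}|j\>_a\<k|\otimes\cE(\sigma\otimes|j\>_q\<k|)\) positive semidefinite, and its diagonal entries on the rows indexed by \(a{=}0,\ q{=}1\) vanish since \(\cE(\sigma\otimes|0\>\<0|)=\cF(\sigma)\otimes|0\>\<0|\); a PSD matrix with zero diagonal entries has zero rows there, which kills the \(\Delta\)-components of \(\cE(\sigma\otimes|0\>\<1|)\), so \(\Delta=0\) and your back-substitution closes the proof with no trace assumption (and, once closed, your linear-span conclusion is complete and even avoids the Stinespring step the paper re-invokes from Theorem~\ref{thm:state-restoration}). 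Two further remarks. First, your diagnosis of the crux is actually sharper than the paper's: no linear relation among the five projectors can involve \(|+_i\>\<+_i|\) (indeed \(|-\>\<-|=|0\>\<0|+|1\>\<1|-|+\>\<+|\), with zero coefficient on \(|+_i\>\<+_i|\)), so the paper's one-line ``non-zero overlap'' justification does not suffice as stated and needs the same positivity repair. Second, your \((3)\Rightarrow(1)\) is correct and genuinely different from the paper's: the paper extracts Kraus operators and shows \((E_k\otimes I_{q'})|i\>|\Phi\>=|\beta_{i,k}\>|\Phi\>\) before stripping off \(\<j|_{q'}\), whereas your single partial-trace identity against \(I_q\otimes X^T_{q'}\) yields \(\cE(\mu\otimes X_q)=\eta_\mu\otimes X_q\) in one shot, using only linearity---and, as you correctly note, no CPTP structure is needed there.
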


Instead of inspecting all quantum states as required in Theorems~\ref{thm:state-restoration} and \ref{thm:entanglement-preservation}, this refined theorem reduces the verification of safe uncomputation to a finite set of states, making it computationally feasible.
Furthermore, in the following subsections, we present an algorithm that reduces the verification of safe uncomputation in certain programs to a classical Boolean satisfiability problem, and demonstrate its efficiency through experimental evaluation.

\subsection{Reduction to Boolean Satisfiability Problem}

This subsection provides a detailed explanation of how the verification of safe uncomputation in circuits implementing classical functions---in which most dirty qubits are used---can be reduced to a Boolean satisfiability problem. 

As noted in Section~\ref{sec:introduction}, even for simple circuits such as three-controlled NOT gates, ensuring the safe uncomputation of dirty qubits is non-trivial and not equivalent to uncomputation of clean qubits, requiring careful consideration of superpositions and entanglement.
We begin by presenting a necessary and sufficient condition for the safe uncomputation of dirty qubits in such circuits, as formalized in the following theorem. 

\begin{theorem}\label{thm:safe-uncomp}
    A quantum circuit \(C\) implementing a classical function (i.e., a quantum program composed solely of \(\gateX\) and multi-controlled \(\mathrm{NOT}\) gates) safely uncomputes a dirty qubit \(q \in \qubits\) if and only if the following conditions hold for all states \(\rho\):
    \[
    \begin{aligned}
        & \rho|_q = |0\>\<0| &&\implies \sem{C}(\rho)|_q = |0\>\<0|,\mbox{ and} \\
        & \rho|_q = |+\>\<+|  &&\implies \sem{C}(\rho)|_q = |+\>\<+|, \\
    \end{aligned}
    \]
    where, since \(C\) contains no \(\tborrow\) statements, \(\sem{C}\) denotes a single deterministic quantum operation without ambiguity.
\end{theorem}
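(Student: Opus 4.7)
The forward direction follows immediately from Theorem~\ref{thm:state-restoration}: any program that safely uncomputes $q$ restores every one-qubit pure initial state on $q$, in particular $|0\rangle$ and $|+\rangle$. The substance of the theorem therefore lies in the converse, where the plan is to exploit the classical structure of $C$ to first decompose $U = \sem{C}$ into two permutation blocks indexed by $q$, and then to use the $|+\rangle$-hypothesis to collapse these blocks to the same unitary.

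Concretely, since $C$ is composed only of $\gateX$ and multi-controlled NOT gates, $U$ is a permutation matrix on the computational basis. I would begin by invoking the standard fact that if $\rho|_q$ is a one-qubit pure state then $\rho$ factorises as $\rho = \rho|_q \otimes \sigma$ for some state $\sigma$ on $\qubits \setminus \{q\}$. Specialising the $|0\rangle$-hypothesis to product states $|0\rangle_q\langle 0| \otimes |x\rangle\langle x|$, with $|x\rangle$ ranging over the computational basis of the remaining qubits, then shows that $U$ preserves the subspace $\{q=0\}$; since $U$ acts as a bijection on computational basis vectors, it must automatically also preserve $\{q=1\}$. Consequently
\[
U \;=\; |0\rangle_q\langle 0| \otimes V_0 \;+\; |1\rangle_q\langle 1| \otimes V_1,
\]
where $V_0$ and $V_1$ are permutation unitaries on $\cH_{\qubits \setminus \{q\}}$.

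Next I would apply the $|+\rangle$-hypothesis to $\rho = |+\rangle_q\langle +| \otimes \sigma$ for an arbitrary state $\sigma$. A short computation with the block form of $U$ yields
\[
\Tr_{\qubits \setminus \{q\}}\!\bigl(U \rho U^\dagger\bigr) \;=\; \tfrac{1}{2}\sum_{a,b \in \{0,1\}} |a\rangle\langle b|\; \Tr(V_a \sigma V_b^\dagger),
\]
and equating this with $|+\rangle\langle +|$ forces the off-diagonal identity $\Tr(V_1^\dagger V_0 \sigma) = 1$ for every state $\sigma$. Taking $\sigma = |x\rangle\langle x|$ for each computational basis $x$, and using that $W \triangleq V_1^\dagger V_0$ is itself a permutation unitary, one obtains $\langle x | W | x \rangle = 1$ for all $x$, hence $W = I$ and $V_0 = V_1$. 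Therefore $U = I_q \otimes V_0$, so Definition~\ref{def:safe-uncomp-circuit} (equivalently Theorem~\ref{thm:state-restoration}) gives that $C$ safely uncomputes $q$.

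The main subtlety I expect is ensuring that the two hypotheses, stated over arbitrary and possibly entangled $\rho$, suffice to pin down both the block decomposition and the equality $V_0 = V_1$. The purity-implies-product observation reduces the $|0\rangle$-hypothesis to a condition on classical product inputs, which is exactly what the block decomposition needs; the $|+\rangle$-hypothesis, by contrast, genuinely uses the off-diagonal coherences of $\rho$, and it is here that the superposition-sensitive part of the argument comes in. Once these two reductions are in place, the rest is a short linear-algebraic calculation.
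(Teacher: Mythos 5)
Your proposal is correct, and its converse direction takes a genuinely different route from the paper's. You use the \(|0\>\)-hypothesis on computational-basis product inputs to show that \(U\) permutes the basis vectors with \(q=0\) among themselves (hence also those with \(q=1\)), yielding the block-diagonal form \(U=|0\>_q\<0|\otimes V_0+|1\>_q\<1|\otimes V_1\) with \(V_0,V_1\) permutations; the \(|+\>\)-hypothesis then gives the off-diagonal condition \(\Tr(V_1^\dagger V_0\sigma)=1\) for all \(\sigma\), and specializing to \(\sigma=|x\>\<x|\) shows \(W=V_1^\dagger V_0\) is a permutation matrix with unit diagonal, hence \(W=I\) and \(V_0=V_1\). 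The paper instead feeds in just the two mixed inputs \(|0\>_q\<0|\otimes I\) and \(|+\>_q\<+|\otimes I\), writes \(U\) in an arbitrary block form \(\bigl(\begin{smallmatrix}A & B\\ C & D\end{smallmatrix}\bigr)\) without first establishing subspace preservation, and reads off four Hilbert--Schmidt trace equations from the reduced states on \(q\); nonnegativity of permutation-matrix entries then forces \(\<C,C\>=\<B,B\>=0\) and \(\<A,D\>=n\), i.e.\ \(B=C=\mathbf{0}\) and \(A=D\). Both arguments exploit classicality (0--1 entries) at the decisive step, but in different places: your version is more structural and makes explicit exactly where the coherence of \(|+\>\) enters (the off-diagonal blocks of the reduced state), while the paper's is a more compact one-shot computation with a single maximally mixed ancilla state in place of your family of basis-state inputs. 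One small remark: the purity-implies-product factorization you invoke is needed only for the forward direction (it is what the proof of Theorem~\ref{thm:state-restoration} uses), not to specialize the hypotheses to product inputs, which is automatic since those inputs trivially satisfy \(\rho|_q=|\psi\>\<\psi|\).
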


This theorem refines previous Theorem~\ref{thm:state-restoration} in the scope of circuits implementing classical functions, and indicates that it suffices to check the restoration of two specific pure states: \(|0\>\) and \(|+\>\).
This result forms the foundation of our algorithm, which reduces the verification problem to a satisfiability instance.

The reduction is grounded in the fact that the semantics of a classical circuit are fully determined by its behavior on initial states in the computational basis, which corresponds to a Boolean function. 

Specifically, we associate each qubit \(q \in \qubits\) with a Boolean variable, also denoted by \(q\) without ambiguity.  
Furthermore, for each qubit \(q\), we define a Boolean formula \(b_q\) to represent and track the effect of the unitary gates in the circuit on \(q\)'s state.

Each term \(b_q\) is initially assigned the variable \(q\) and updated sequentially as follows:
\begin{itemize}
    \item Upon encountering \(\gateX[q]\), update \(b_q := \neg b_q\);
    \item Upon encountering an \(m\)-controlled NOT gate 
    \[
    \mathrm{C}^m\mathrm{NOT}[q_1,...,q_m,q_{m+1}],
    \] 
    update 
    \[
        b_{q_{m+1}}:=b_{q_{m+1}}\oplus (b_{q_1}b_{q_2}\cdots b_{q_m}),
    \]
\end{itemize}
where \(\oplus\) denotes XOR, and the concatenation \(b_{q_1}b_{q_2}\cdots b_{q_m}\) suppresses the explicit AND operator for brevity.
\begin{figure}[t]
    \[
    \begin{array}{cccccc}
            & b_{q_1} & b_{q_2} & a & b_{q_3} & b_{q_4}\\[0.3em]
        \hline\noalign{\vskip 2pt}
        \mbox{Initial}  & q_1 & q_2 & a & q_3 & q_4\\
        \mbox{1st gate} & q_1 & q_2 & a\oplus q_1q_2 & q_3 & q_4\\
        \mbox{2nd gate} & q_1 & q_2 & a\oplus q_1q_2 & q_3 & q_4\oplus q_3(a\oplus q_1q_2)\\
        \mbox{3rd gate} & q_1 & q_2 & a & q_3 & q_4\oplus q_3(a\oplus q_1q_2)\\
        \mbox{4th gate} & q_1 & q_2 & a & q_3 & q_4\oplus q_3(a\oplus q_1q_2)\oplus q_3a
    \end{array}
    \]
    \caption{Construction of Boolean formulas for a three-controlled NOT gate.}
    \label{fig:boolean-formula}
\end{figure}
\begin{example}
    Consider the right circuit in Figure~\ref{fig:three-controlled-not}, where \(a\) is treated as a concrete qubit.
    The updates to each Boolean formula according to the rules described above are summarized in Figure~\ref{fig:boolean-formula},
    where we simplify \(b_a = a \oplus q_1q_2 \oplus q_1q_2\) to \(b_a = a\) after the third gate, using the identity \(x \oplus x = 0\).
\end{example}

After constructing the Boolean formula \(b_q\) for each qubit, the first condition in Theorem~\ref{thm:safe-uncomp} reduces to checking the unsatisfiability of the Boolean formula  
\begin{equation}\label{eq:cond-1}
    \neg(b_q\to q),
\end{equation}  
where \(\to\) denotes logical implication.  
By the truth table of implication, \eqref{eq:cond-1} is unsatisfiable if and only if \(q = 0\) implies \(b_q = 0\).

The second condition in Theorem~\ref{thm:safe-uncomp} requires a more refined analysis and is equivalent to verifying the unsatisfiability of the Boolean formula  
\begin{equation}\label{eq:cond-2}
\bigvee_{q' \in \qubits, q' \neq q} b_{q'}[0/q] \oplus b_{q'}[1/q].
\end{equation}  

Intuitively, restoring the state \(|+\>\) requires that the final states of all other qubits be identical whether \(q\) is in \(|0\>\) or \(|1\>\); equivalently, their final states must be independent of \(q\), precisely capturing the essence of safe uncomputation.

\begin{theorem}\label{thm:boolean}
    A quantum circuit \(C\) implementing a classical function safely uncomputes a dirty qubit \(q\) if and only if both Boolean formulas \eqref{eq:cond-1} and \eqref{eq:cond-2} are unsatisfiable. 
\end{theorem}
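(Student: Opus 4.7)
The plan is to deduce Theorem~\ref{thm:boolean} from Theorem~\ref{thm:safe-uncomp} by separately translating each of the two state-restoration conditions into its Boolean counterpart. As a preliminary, I would establish by induction on the sequence of gates in $C$ that each symbolic formula $b_{q_i}$, evaluated at a Boolean assignment $x \in \{0,1\}^n$, equals the $i$-th output bit of the classical permutation $f:\{0,1\}^n \to \{0,1\}^n$ that $C$ implements on the computational basis. The update rules for $\gateX$ and multi-controlled NOT correspond exactly to bit negation and XOR-into-target, so this induction is routine.

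For the first state-restoration condition (restoring $|0\>$), any $\rho$ with $\rho|_q = |0\>\<0|$ must be a product state $|0\>\<0|_q \otimes \rho'$, since a pure marginal forces factorization. By linearity of $\sem{C}$ and of the partial trace, it then suffices to verify the implication on computational basis inputs $|0\>\<0|_q \otimes |y\>\<y|$, where it reduces to $f_q(0,y) = 0$ for every $y$. Via the soundness lemma for $b_q$, this is precisely the tautology $b_q \to q$, equivalently the unsatisfiability of \eqref{eq:cond-1}.

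For the second condition (restoring $|+\>$), the same factorization argument reduces the problem to inputs $|+\>\<+|_q \otimes \rho'$, and one shows it suffices to examine $\rho' = |y\>\<y|$ for each computational basis $y$ (the basis-input case forces $\tilde f(0,y) = \tilde f(1,y)$, which in turn yields a factorized output $|+\>_q \otimes |\chi\>$ on an arbitrary input). The final state for a basis input is $\tfrac{1}{\sqrt 2}\bigl(|f(0,y)\> + |f(1,y)\>\bigr)$; splitting the output into its $q$-th bit $f_q(b,y)$ and remaining bits $\tilde f(b,y)$ and computing the reduced density on $q$ across the four possible combinations of $f_q(0,y)$ and $f_q(1,y)$, one sees that the reduced state equals $|+\>\<+|$ exactly when $\tilde f(0,y) = \tilde f(1,y)$---bijectivity of $f$ then forces $f_q(0,y) \neq f_q(1,y)$, ruling out the remaining cases. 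Translated back through the soundness lemma, this is precisely the unsatisfiability of \eqref{eq:cond-2}. Combining both equivalences with Theorem~\ref{thm:safe-uncomp} completes the argument.

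The main obstacle I expect is the $|+\>$-case: one must carefully compute the reduced density of $\tfrac{1}{\sqrt 2}(|f(0,y)\> + |f(1,y)\>)$ across the four subcases and invoke bijectivity of $f$ at precisely the right point, to ensure that when the non-$q$ outputs agree, the $q$-outputs must disagree---thereby ruling out the undesired mixed outcomes $|0\>\<0|$, $|1\>\<1|$, and $I/2$. The auxiliary facts---that a pure marginal implies a product state, and that basis inputs suffice to detect any failure of the $|+\>$-restoration---are standard but warrant explicit statement rather than being taken for granted.
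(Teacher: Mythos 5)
Your proposal is correct and follows essentially the same route as the paper's own proof: both reduce the claim via Theorem~\ref{thm:safe-uncomp} to restoration of \(|0\>\) and \(|+\>\), pass to computational-basis inputs using the fact that \(U\) is a permutation, and translate the resulting basis-level conditions into the unsatisfiability of \eqref{eq:cond-1} and \eqref{eq:cond-2} through the symbolic formulas \(b_q\). Your four-subcase computation of the reduced density of \(\tfrac{1}{\sqrt 2}(|f(0,y)\> + |f(1,y)\>)\), with bijectivity invoked to force \(f_q(0,y)\neq f_q(1,y)\), is simply a more explicit fill-in of the step the paper compresses into \eqref{eq:ac2}.
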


\begin{figure}[t]
    \centering
    \begin{lstlisting}[style=qborrow]
 // adder.qbr
 let n = 50;        // number of qubits
 borrow@ q[n];      // skip verification
 borrow a[n - 1];   // dirty qubits
 CNOT[a[n - 1], q[n]];
 for i = (n - 1) to 2 {
     CNOT[q[i], a[i]];
     X[q[i]];
     CCNOT[a[i - 1], q[i], a[i]];
 }
 CNOT[q[1], a[1]];
 for i = 2 to (n - 1) {
     CCNOT[a[i - 1], q[i], a[i]];
 }
 CNOT[a[n - 1], q[n]];
 X[q[n]];

 // reverse the circuit to uncompute
 for i = (n - 1) to 2 {
     CCNOT[a[i - 1], q[i], a[i]];
 }
 CNOT[q[1], a[1]];
 for i = 2 to (n - 1) {
     CCNOT[a[i - 1], q[i], a[i]];
     X[q[i]];
     CNOT[q[i], a[i]];
 }
    \end{lstlisting}
    \caption{Implementation of adder circuits in \textsf{QBorrow}.}
    \label{fig:adder-circuit}
\end{figure}

\begin{figure}[t]

    \centering
    \begin{tikzpicture}
        \begin{axis}[
            xlabel=Number of qubits,
            ylabel=Verification Duration (s),
            ymode=linear,
            ytick={0,150, 300, 450, 600, 750, 900, 1050, 1200},
            xtick distance=25,
            width=8cm,
            height=5.5cm,
            legend pos=north west
        ]
        \addplot[blue, mark=x] coordinates {
            (50, 4)
            (75, 24)
            (100, 71)
            (125, 171)
            (150, 365)
            (175, 751)
            (200, 1069)
        };
        \addlegendentry{CVC5}

        \addplot[red, mark=+] coordinates {
            (50, 3)
            (75, 12)
            (100, 29)
            (125, 98)
            (150, 158)
            (175, 248)
            (200, 313)
        };
        \addlegendentry{Bitwuzla}

        \end{axis}
    \end{tikzpicture}

    \caption{Verification overhead of adder circuits.}
    \label{fig:adder-result}

~\\
    \begin{tikzpicture}
        \begin{axis}[
            xlabel=Number of qubits,
            ylabel=Verification Duration (s),
            ymode=linear,
            ytick={0,30, 60, 90, 120, 150, 180, 210, 240},
            xtick distance=500,
            width=8cm,
            height=5.5cm,
            legend pos=north west
        ]
        \addplot[blue, mark=x] coordinates {
            (499, 0)
            (999, 1)
            (1499, 4)
            (1999, 7)
            (2499, 11)
            (2999, 17)
            (3499, 27)
        };
        \addlegendentry{CVC5}

        \addplot[red, mark=+] coordinates {
            (499, 3)
            (999, 16)
            (1499, 35)
            (1999, 61)
            (2499, 115)
            (2999, 163)
            (3499, 239)
        };
        \addlegendentry{Bitwuzla}

        \end{axis}
    \end{tikzpicture}

    \caption{Verification overhead of MCX circuits.}
    \label{fig:mcx-result}
\end{figure}

\subsection{Implementation and Evaluation}


To address the most central concerns in the field of quantum circuit verification---
\begin{quote}
\emph{Is our approach \emph{efficient} and \emph{scalable} enough to verify large-scale quantum circuits?}
\end{quote}
We implement the reduction algorithm described in the previous subsection and evaluate it on two benchmark programs. 
We first present the implementation methodology, followed by a discussion of the experimental results.

\paragraph{Implementation Methodology}
A restricted version of the \textsf{QBorrow} language is implemented in C++, supporting \(\tborrow\) statements, basic \(\tfor\) loops, and unitary gate applications. 
The frontend parser, responsible for syntax analysis and error handling, is generated using ANTLR4---a widely used parser generator for processing structured text.

A sample benchmark program is shown in Figure~\ref{fig:adder-circuit}, which implements a constant adder circuit adapted from Fig.~2 in~\cite{haner2016factoring}. 
For an ideal experiment, the statement \(\tborrow @\ q[n]\) indicates that no assumptions are made about the initial states of \(n\) qubits \(q[1], \ldots, q[n]\); equivalently, the program borrows \(q[1],...,q[n]\) without requiring safely uncomputing them. 
It can be verified that the circuit acts as the identity on qubits \(a[1], \ldots, a[n]\) and \(q[1], \ldots, q[n{-}1]\), while writing the most significant bit of the sum \((s_1 \ldots s_n)_2 + (11\ldots 1)_2\) into \(q[n]\), where the input bits \(s_1 \ldots s_n\) are encoded in the states of \(q[1], \ldots, q[n{-}1]\).

The backend verifier translates classical circuits into Boolean formulas whose satisfiability must be checked, as described in the previous subsection. 
The resulting formulas are submitted to state-of-the-art SMT and SAT solvers, CVC5~\cite{cvc5} and Bitwuzla~\cite{bitwuzla}, for satisfiability checking.

\paragraph{Experimental Results}
In addition to the adder circuit, we also evaluate our implementation on multi-controlled NOT (MCX) circuits adapted from~\cite{Gidney-blog}, where a \((2n-1)\)-controlled NOT gate is implemented using one dirty ancilla qubit and \(16(n-2)\) \(\gateToffoli\) gates for natural numbers \(n\geq 3\).
The full implementation of the MCX circuit in \textsf{QBorrow} is available in the supplementary material.

All benchmarks were executed on a MacBook Air equipped with an 8-core Apple M3 processor and 24 GB of RAM. 
The code was compiled using the GNU \texttt{g++} compiler with optimization level \texttt{-O3} enabled.

Figures~\ref{fig:adder-result} and~\ref{fig:mcx-result} summarize the experimental results for verifying the safe uncomputation of \(n\) dirty qubits in the adder circuit and a single dirty qubit in the MCX circuits, respectively. 
The verification time includes only the duration taken by the SMT solvers to check the satisfiability of the resulting formulas, which constitutes the main performance bottleneck.
It excludes parsing, preprocessing, debugging output, and the reduction process, as the construction of Boolean formulas involves only a linear scan of the circuit and completes in under one second.

The experimental results demonstrate that our reduction method is both effective and scalable, enabling efficient verification of adder circuits with hundreds of qubits, as the verification overhead grows polynomially with the number of qubits.
For structurally simpler MCX circuits, the method can handle instances with thousands of qubits.  
Notably, due to differences in incremental solving strategies and formula simplification algorithms, CVC5 and Bitwuzla each exhibit efficiency advantages on one of the two benchmark programs.

\section{Discussions}\label{sec:discussion}
We now discuss potential architectural applications of dirty qubits and the corresponding issues related to verifying their safety.

\paragraph{Single-program optimization\cite{gidney2018factoringn2cleanqubits,haner2016factoring,crypt}}
Existing quantum programming languages support implicit utilization of dirty qubits, where programmers manually specify variables to participate in the computations, even if they act as dirty qubits.
However, the programmer may mistakenly identify a qubit as idle when it is not, or fail to account for qubits that only become idle after compilation and gate parallelization. 
Therefore, dirty qubit scheduling is better handled by the compiler, relieving the programmer of this burden and enabling more accurate and efficient qubit reuse.

Microsoft's Q\#~\cite{qsharp,qsharpmemory} language provides a \(\tborrow\) statement that allows programmers to borrow qubits from other parts of the program. 
However, it has not yet provided the definition and verification of safe uncomputation, leaving all potential risks to the programmer.  
While this approach is acceptable for qubit reuse within the optimization of a single program, it becomes extremely risky in the multi-program scenarios discussed next.

\paragraph{Multi-program scheduling~\cite{multiprogram}}\footnote{A program that manages and schedules quantum resources across various computation tasks on the same quantum processor.}
Multi-programming is a technique used in quantum cloud services like QuCloud \cite{QuCloud,QuCloud+}  to enhance the throughput and utilization of NISQ machines by running multiple workloads concurrently. 
By borrowing idle qubits between different programs or rescheduling program priorities to manage qubit borrowing, multiple programs can still take advantage of the reductions in circuit size and depth provided by dirty qubits under concurrent execution.

However, this approach requires a higher level of safety assurance, since incorrectly returning a borrowed dirty qubit---whether through state alteration or entanglement disruption---can cause errors or even crashes in other programs.  
Beyond the safe uncomputation verification addressed in this paper, termination analysis is also necessary, as programs that borrow dirty qubits but fail to terminate are regarded as exhibiting improper behavior.
The analysis of program termination is independent of and complementary to the safe uncomputation studied in this paper. Furthermore, several existing studies have already explored this area~\cite{termination-problem,algebraic-termination}.
\section{Related Work}\label{sec:related}
\paragraph{Utilizations of dirty qubits}
\citet{elementary} were the first to investigate the construction of multi-controlled X (MCX) gates using dirty qubits.  
Subsequently, \citet{haner2016factoring} and \citet{gidney2018factoringn2cleanqubits} studied the use of dirty qubits to optimize resource costs in quantum circuits for factorization.  
More recently, \citet{unitarysynthesis} proposed a general method to trade \(\mathrm{T}\) gates for dirty qubits in universal state preparation and unitary synthesis, and \citet{crypt} employed dirty qubits to minimize the \(T\)-depth or width of quantum circuit gadgets for cryptographic applications.

In the realm of programming languages, ReQwire \cite{reqwire} noted that uncomputing dirty ancilla qubits requires substantial additional mechanisms but offers significant gains in expressiveness, and languages such as Q\# \cite{qsharp} already facilitate the borrowing of dirty qubits.
However, none of these works addressed the formal characterization and verification of safe uncomputation of dirty qubits.

\paragraph{Uncomputation of clean qubits} 
Automatic uncomputation improves both resource management and the usability of quantum programming. 
Silq~\cite{silq} was the first quantum language to provide safe, automatic uncomputation via type checking, simplifying the handling of temporary values and reducing code complexity and programmer effort.  
\citet{unqomp} introduced Unqomp, the first procedure to automatically synthesize uncomputation in quantum circuits. Building on Unqomp, \citet{qrispuncomp} integrated uncomputation into the Qrisp high-level quantum programming framework. 
\citet{reqomp} further explored space-constrained uncomputation, optimizing quantum circuits by trading qubits for gates. 
Most recently, \citet{modularuncomp} introduced the first modular approach to automatic uncomputation for expressive quantum programs, using modular algorithms over an intermediate representation to enhance efficiency and scalability.
However, none of these works addressed the safe uncomputation or efficient synthesis of dirty qubits, making it an attractive direction to extend or adapt these techniques to support dirty qubits.

\newpage

\bibliographystyle{ACM-Reference-Format}
\bibliography{ref.bib}

\newpage
\section{Deferred Proofs}

\subsection{Proof of Theorem~\ref{thm:state-restoration}}
\begin{theorem}
    A program \(S\) safely uncomputes a qubit \(q\) if and only if for all states \(\rho\) and any one-qubit pure state \(|\psi\>\<\psi|\),
    \begin{equation}
    \rho|_q=|\psi\>\<\psi| \implies \forall \cE \in \sem{S}, \  \cE(\rho)|_q = |\psi\>\<\psi|.
    \end{equation}
\end{theorem}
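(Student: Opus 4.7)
The plan is to prove both directions of Theorem~\ref{thm:state-restoration}. The forward direction hinges on a standard structural fact: whenever a nonzero partial density operator $\rho$ on $\cH_\qubits$ satisfies $\rho|_q=|\psi\>\<\psi|$, it must already factor as $\rho=|\psi\>_q\<\psi|\otimes\sigma$ for some $\sigma$ on $\cH_{\qubits\setminus\{q\}}$, because any component of $\rho$ supported outside the range of $|\psi\>_q\<\psi|\otimes I$ would, by positivity of $\rho$, contribute to $\rho|_q$ an operator orthogonal to $|\psi\>\<\psi|$. Under the hypothesis $\cE=\cI_q\otimes\cE'$, this factorization yields $\cE(\rho)=|\psi\>_q\<\psi|\otimes\cE'(\sigma)$, whose reduced state on $q$ is $|\psi\>\<\psi|$.

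For the backward direction, fix an arbitrary $\cE\in\sem{S}$; the goal is to construct $\cE'$ on $\cH_{\qubits\setminus\{q\}}$ with $\cE=\cI_q\otimes\cE'$. Applying the hypothesis to every product input $\rho=|\psi\>_q\<\psi|\otimes\sigma$ and reusing the structural fact above on the output $\cE(\rho)$, I obtain, for each one-qubit pure state $|\psi\>$, a linear map $\cE_\psi$ on $\cH_{\qubits\setminus\{q\}}$ such that $\cE(|\psi\>_q\<\psi|\otimes\sigma)=|\psi\>_q\<\psi|\otimes\cE_\psi(\sigma)$ for every $\sigma$.

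The main obstacle is to show that $\cE_\psi$ does not depend on $\psi$ and that $\cE$ acts trivially on the off-diagonal operators $|0\>_q\<1|$ and $|1\>_q\<0|$ as well. The approach is to parametrize $|\psi\>=\alpha|0\>+\beta|1\>$, expand $|\psi\>\<\psi|=|\alpha|^2|0\>\<0|+\alpha\bar\beta|0\>\<1|+\bar\alpha\beta|1\>\<0|+|\beta|^2|1\>\<1|$, and decompose both sides of the above identity in the four-element basis $\{|i\>_q\<j|\}_{i,j\in\{0,1\}}$ via the linearity of $\cE$. Treating $\alpha,\beta$ as free complex parameters subject only to $|\alpha|^2+|\beta|^2=1$ and matching coefficients in the resulting four-component identity will force $\cE_{|0\>}$ and $\cE_{|1\>}$ to coincide---call this common map $\cE'$---force the projections of $\cE(|0\>_q\<1|\otimes\sigma)$ and $\cE(|1\>_q\<0|\otimes\sigma)$ onto the $|0\>_q\<0|$ and $|1\>_q\<1|$ sectors to vanish, and ultimately yield $\cE(|i\>_q\<j|\otimes\sigma)=|i\>_q\<j|\otimes\cE'(\sigma)$ for all $i,j\in\{0,1\}$.

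Once this structural identity is established on every elementary tensor factor, linearity extends it to $\cE(A_q\otimes\sigma)=A_q\otimes\cE'(\sigma)$ for all operators $A$ on $\cH_q$ and $\sigma$ on $\cH_{\qubits\setminus\{q\}}$, and then to all $\rho$ by expanding $\rho$ in a tensor-product operator basis. Hence $\cE=\cI_q\otimes\cE'$, completing the proof.
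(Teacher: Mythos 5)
Your proof is correct, but your (\(\Leftarrow\)) direction takes a genuinely different route from the paper's. The paper invokes the Stinespring dilation: writing \(\cE(\cdot)=\Tr_{\cH'}\pare{A(\cdot)A^\dagger}\), it observes that a pure input \(|\psi\>_q|\varphi\>\) with pure output marginal on \(q\) forces the dilated \emph{vector} to factorize, \(A|\psi\>|\varphi\>=|\psi\>|\phi\>\), whence varying \(|\psi\>\) gives \(A=I_q\otimes B\) and \(\cE'=\Tr_{\cH'}(B\cdot B^\dagger)\). You instead stay at the level of the superoperator itself: decomposing \(\cE(|i\>_q\<j|\otimes\sigma)\) in the operator basis \(\{|i\>_q\<j|\}\) and matching coefficients as the phase of \(\alpha\bar\beta\) varies, which (I checked: the phase functions \(1,e^{\pm i\theta},e^{\pm 2i\theta}\) are linearly independent, so the cross-sector terms \(\<k|_q\cE(|i\>_q\<j|\otimes\sigma)|l\>_q\) with \((k,l)\neq(i,j)\) all vanish and \(\cE_{|0\>}=\cE_{|1\>}=\cE_\psi\)) does force \(\cE(|i\>_q\<j|\otimes\sigma)=|i\>_q\<j|\otimes\cE'(\sigma)\). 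Your approach buys elementarity---no dilation theorem is needed---and is in fact closer in spirit to the paper's later proof of its refined Theorem~5.1, which also decomposes over a finite operator basis of \(\cL(\cH_q)\) but then still falls back on the Stinespring step; yours dispenses with that fallback entirely. The paper's approach buys brevity (factorizing a vector is much cheaper than your four-sector bookkeeping) and yields \(\cE'\) manifestly completely positive and trace non-increasing; in your construction you should add the one-line remark that \(\cE'(\sigma)=\Tr_{q}\pare{\cE(|0\>_q\<0|\otimes\sigma)}\) is a composition of quantum operations, so the factor you produce is indeed a legitimate quantum operation. Also note that your coefficient matching needs at least three inequivalent phases of \(\alpha\bar\beta\) (e.g., the states \(|+\>\), \(|-\>\), \(|+_i\>\)); your free-parameter formulation covers this, but it is worth making explicit. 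Your (\(\Rightarrow\)) direction, via the fact that a pure marginal forces a product state, is identical to the paper's.
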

\begin{proof}
    (\(\Rightarrow\)) Suppose that the program \(S\) safely uncomputes the qubit \(q\). It follows from the definition that for any \(\cE\in \sem{S},\ \cE=\cI_q\otimes \cE'\) for some \(\cE'\).
    Therefore, when \(\rho|_q=|\psi\>\<\psi|\) for arbitrary one-qubit pure state \(|\psi\>\),  \(\rho=\rho'\otimes|\psi\>_q\<\psi|\) for some \(\rho'\) and 
    \[
    \cE(\rho)=(\cE'\otimes \cI_q)(\rho'\otimes |\psi\>_q\<\psi|)=\cE'(\rho')\otimes |\psi\>_q\<\psi|,
    \]
    which implies that \(\cE(\rho)|_q=|\psi\>\<\psi|\).

    (\(\Leftarrow\)) For each quantum operation \(\cE\in\sem{S}\), it suffices to show that if for all pure states \(|\psi\>_q\<\psi|\otimes |\varphi\>\<\varphi|\), we have 
    \[
    \cE(|\psi\>_q\<\psi|\otimes |\varphi\>\<\varphi|) = |\psi\>_q\<\psi|\otimes \rho\mbox{ for some }\rho,
    \]
    then \(\cE=\cI_q\otimes \cE'\) for some \(\cE'\).

    Let \(\cH=\bigotimes_{q'\in \qubits\setminus \{q\}}\cH_{q'}\), consider the Stinespring dilation of the quantum operation \(\cE\), there exists another Hilbert space $\cH'$ and a linear operator $A\in\cL(\cH_q\otimes\cH,\cH_q\otimes\cH\otimes \cH')$ such that 
    \[
    \cE(|\psi\>_q\<\psi|\otimes|\varphi\>\<\varphi|)=\Tr_{\cH'}\pare{A(|\psi\>_q\<\psi|\otimes |\varphi\>\<\varphi|)A^\dagger}.
    \]
    Because \(\cE(|\psi\>_q\<\psi|\otimes|\varphi\>\<\varphi|)=|\psi\>_q\<\psi|\otimes \rho\), it can be verified that 
    \[
    A|\psi\>|\varphi\>=|\psi\>|\phi\>
    \]  
    for some \(|\phi\>\in \cH\otimes \cH'\) such that \(\Tr_{\cH'}(|\varphi\>\<\varphi|)=\rho\).
    Since \(|\psi\>\in\cH_q\) is arbitrarily chosen, there exists another linear operator $B\in \cL(\cH,\cH\otimes\cH')$ such that \(A=I_q\otimes B\). We then construct \(\cE'\) as 
    \[
    \cE'(\rho)=\Tr_{\cH'}(B\rho B^\dagger),
    \]
    from the property of Stinespring dilation, we can conclude that \(\cE=\cI_q\otimes \cE'\).
\end{proof}

\subsection{Proof of Theorem~\ref{thm:entanglement-preservation}}
\begin{theorem}
    A program \(S\) safely uncomputes a qubit \(q\) if and only if, for some additional hypothetical qubits \(\bar{q}' \notin \qubits\), any state \(\rho\) of \(\qubits \cup \bar{q}'\) and any entangled state \(\rho'\)
    \begin{equation}
        \begin{aligned}
    &\rho|_{q,\bar{q}'}=\rho' \implies \forall \cE \in \sem{S}, \  (\cE \otimes \cI_{\bar{q}'})(\rho)|_{q,\bar{q}'}=\rho'.
        \end{aligned}
    \end{equation}
\end{theorem}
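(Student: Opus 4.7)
The plan is to prove both directions with the explicit instantiation $\bar{q}' = \{q'\}$ (a single extra qubit) and $\rho' = |\Phi\>_{q,q'}\<\Phi|$ (the Bell state), reducing the nontrivial direction to Theorem~\ref{thm:state-restoration} via a Choi--Jamio\l{}kowski rigidity argument. The forward direction ($\Rightarrow$) is routine: if $\cE = \cI_q \otimes \cE'$, then $\cE \otimes \cI_{\bar{q}'} = \cI_q \otimes \cE' \otimes \cI_{\bar{q}'}$ acts as the identity on $q$ and $\bar{q}'$, so tracing out $\qubits \setminus \{q\}$ and renormalizing gives $(\cE \otimes \cI_{\bar{q}'})(\rho)|_{q,\bar{q}'} = \rho|_{q,\bar{q}'}$ for arbitrary $\rho$ and any entangled $\rho'$.

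For the backward direction ($\Leftarrow$), fix $\cE \in \sem{S}$ and aim to invoke Theorem~\ref{thm:state-restoration}. Two observations drive the argument. First, if $\rho$ on $\qubits \cup \{q'\}$ satisfies $\rho|_{q,q'} = |\Phi\>\<\Phi|$ (pure), then by the standard purification lemma $\rho$ factors as $\rho = \tau_R \otimes |\Phi\>_{q,q'}\<\Phi|$ with $R := \qubits \setminus \{q\}$; by linearity it suffices to consider pure $\tau_R = |\varphi\>\<\varphi|$. Second, define the induced channel $\cF_\varphi$ on $\cH_q$ by
\[
    \cF_\varphi(\sigma) := \Tr_R\bigl(\cE(|\varphi\>\<\varphi|_R \otimes \sigma)\bigr).
\]
Since $\cE$ does not touch $q'$ and the partial trace over $R$ commutes with $\cI_{q'}$, the entanglement-preservation hypothesis instantiated at $\tau_R = |\varphi\>\<\varphi|$ rearranges to $(\cF_\varphi \otimes \cI_{q'})(|\Phi\>\<\Phi|) = |\Phi\>\<\Phi|$. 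The Choi--Jamio\l{}kowski rigidity of the maximally entangled state then forces $\cF_\varphi = \cI_q$: writing a Kraus decomposition $\cF_\varphi(\sigma) = \sum_k M_k \sigma M_k^\dagger$, each $(M_k \otimes I)|\Phi\>$ must be proportional to $|\Phi\>$, whence $M_k = \lambda_k I$, and $\sum_k |\lambda_k|^2 = 1$. Consequently $\cE(|\varphi\>\<\varphi| \otimes |\psi\>\<\psi|)|_q = |\psi\>\<\psi|$ for all pure $|\varphi\>,|\psi\>$, and linearity combined with the first observation extends this to every $\rho$ with $\rho|_q = |\psi\>\<\psi|$; Theorem~\ref{thm:state-restoration} then delivers $\cE = \cI_q \otimes \cE'$.

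The main obstacle is the rigidity step: making the Choi--Jamio\l{}kowski statement precise for trace-non-increasing maps (rather than the textbook trace-preserving case) and correctly tracking the normalization convention $\rho|_q = \Tr_R(\rho)/\Tr(\rho)$ through the partial-trace manipulations. Both issues are handled by working at the level of unnormalized partial densities and observing that $\Tr\bigl((\cF_\varphi \otimes \cI_{q'})(|\Phi\>\<\Phi|)\bigr) = 1$ already forces $\cF_\varphi$ to be trace-preserving on the subspace it sees, so the classical rigidity argument applies without modification.
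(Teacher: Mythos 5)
Your proof is correct, but it takes a genuinely different route from the paper's---and it is worth noting that the paper's printed proof of this theorem only establishes the \((\Rightarrow)\) direction by direct computation; the converse is actually supplied later, inside the proof of the refined theorem (implication \(3\Rightarrow 1\)), which is the right comparison point. There the authors work directly on Kraus operators: applying the hypothesis to the states \(|i\>\<i|\otimes|\Phi\>_{q,q'}\<\Phi|\), a support argument forces \((E_k\otimes I_{q'})(|i\>\otimes(|00\>+|11\>)_{q,q'})=|\beta_{i,k}\>\otimes(|00\>+|11\>)_{q,q'}\); multiplying by \(\<j|_{q'}\) gives \(E_k(|i\>\otimes|j\>_q)=|\beta_{i,k}\>\otimes|j\>_q\), hence \(E_k=E_k'\otimes I_q\) and \(\cE=\cE'\otimes\cI_q\) in one step, with no appeal to Theorem~\ref{thm:state-restoration}. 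You instead compress the problem to the induced single-qubit map \(\cF_\varphi\), invoke Choi--Jamio\l{}kowski rigidity of the Bell state to get \(\cF_\varphi\propto\cI_q\), and then route through the \((\Leftarrow)\) direction of Theorem~\ref{thm:state-restoration} (the Stinespring argument). Both are valid: the paper's argument is self-contained and delivers the Kraus factorization directly, while yours isolates the rigidity content cleanly and, like the paper's refined version, consumes only the Bell-state instance of the hypothesis, so it in fact proves the stronger statement of Theorem~\ref{thm:refined}(3).

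One small correction to your normalization remark: nothing forces \(\Tr\bigl((\cF_\varphi\otimes\cI_{q'})(|\Phi\>\<\Phi|)\bigr)=1\), since \(\cE\) is merely trace non-increasing---e.g., \(\cE=\tfrac12\cI\) safely uncomputes \(q\) yet has trace factor \(\tfrac12\). The correct rearrangement of the hypothesis is \((\cF_\varphi\otimes\cI_{q'})(|\Phi\>\<\Phi|)=t\,|\Phi\>\<\Phi|\) with \(t=\Tr\bigl((\cE\otimes\cI_{q'})(\rho)\bigr)\le 1\); rigidity then yields \(M_k=\lambda_k I\) with \(\sum_k|\lambda_k|^2=t\), so \(\cF_\varphi=t\,\cI_q\), and after the normalization in \(\rho|_q\) the reduced state is still \(|\psi\>\<\psi|\) whenever \(t>0\). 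This is a bookkeeping fix rather than a gap (the degenerate case \(t=0\), where the normalized reduced state is undefined, is glossed over by the paper's own proofs as well), so your argument goes through essentially unchanged once the scalar \(t\) is carried along.
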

\begin{proof}
    (\(\Rightarrow\)) Suppose that the program \(S\) safely uncomputes the qubit \(q\). 
    It follows from the definition that for any \(\cE\in \sem{S},\ \cE=\cI_q\otimes \cE'\) for some \(\cE'\). 
    Given a state \(\rho\) with decomposition \(\rho=\sum_i\alpha_i(\rho^{(i)}\otimes \sigma_{q,\bar{q}'}^{(i)})\) (note that \(\alpha_i\) are complex coefficients), we have
    \[
    \begin{aligned}
        (\cE\otimes\cI_{\bar{q}'})(\rho)|_{q,\bar{q}'}&=\frac{\Tr_{\qubits\setminus \{q\}\cup \bar{q}'}\pare{(\cE'\otimes \cI_{q,\bar{q}'})(\rho)}}{\Tr((\cE'\otimes \cI_{q,\bar{q}'})(\rho))}\\
        &=\frac{\pare{(\Tr\circ\cE')\otimes \cI_{q,\bar{q}'}}(\rho)}{\Tr((\cE'\otimes \cI_{q,\bar{q}'})(\rho))}\\
        &=\frac{\sum_i\alpha_i\pare{(\Tr\circ\cE')\otimes \cI_{q,\bar{q}'}}(\rho^{(i)}\otimes \sigma^{(i)}_{q,\bar{q}'})}{\sum_i\alpha_i\Tr((\cE'\otimes \cI_{q,\bar{q}'})(\rho^{(i)}\otimes \sigma^{(i)}_{q,\bar{q}'}))}\\
        &=\frac{\sum_i\alpha_i\Tr(\cE'(\rho^{(i)}))\otimes \sigma^{(i)}_{q,\bar{q}'}}{\sum_i\alpha_i\Tr(\cE'(\rho^{(i)}))\otimes \Tr(\sigma^{(i)}_{q,\bar{q}'})}\\
        &=\frac{\sum_i\alpha_i\sigma^{(i)}}{\Tr(\sum_i\alpha^{(i)})}\\
        &=\rho|_{q,\bar{q}'}.
    \end{aligned}
    \]
\end{proof}

\subsection{Proof of Theorem~\ref{thm:safe-program}}
\begin{theorem}
    A \textsf{QBorrow} program \(S\) is safe if and only if \(S\) is equivalent to a deterministic program.
    Formally,
    \[
    S\mbox{ is safe}\iff|\sem{S}| \leq 1\mbox{ in arbitrarily large }\qubits.
    \]
\end{theorem}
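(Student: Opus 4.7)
The plan is to prove both implications by structural induction on $S$, with the $\tborrow$ constructor as the key case. The main tool is a \emph{swap-invariance} observation: for any two distinct idle qubits $q_1, q_2 \in \idle(T)$ in a statement $\tborrow\ a; T; \trel\ a$, the unitary $\mathrm{SWAP}_{q_1,q_2}$ conjugates $\sem{T[q_1/a]}$ into $\sem{T[q_2/a]}$, since renaming $a$ simply swaps the two roles. In particular, each $\cE \in \sem{T[q_j/a]}$ already factors as $\cI_{q_i}\otimes(\cdot)$ for $i\neq j$, because $q_i$ does not appear in $T[q_j/a]$. This is the essential bridge between the nondeterminism of $\tborrow$ and the factor-as-identity characterization of safe uncomputation in Definition~\ref{def:safe-uncomp}.

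For the forward direction, each non-$\tborrow$ constructor gives $|\sem{S}|\leq \prod|\sem{\text{subprograms}}|\leq 1$ by the induction hypothesis; the while loop likewise has $\sem{S}^{\bN}$ collapsing to a single infinite schedule when $|\sem{S}|\leq 1$. For $S = \tborrow\ a; T; \trel\ a$, safety of $S$ simultaneously yields (i) safety of the nested $\tborrow$s inside $T[q/a]$, so $|\sem{T[q/a]}|\leq 1$ by IH, and (ii) the unique $\cE_q\in \sem{T[q/a]}$ acts as identity on $q$. Combining with the trivial action on every other idle $q'\in\idle(T)$, one obtains $\cE_q = \cI_q\otimes \cI_{q'}\otimes \cE''$; swap-invariance then forces $\cE''$ to be independent of which idle qubit was chosen, so $\bigcup_{q\in\idle(T)}\sem{T[q/a]}$ collapses to at most one operation.

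For the backward direction I argue the contrapositive. If some $\tborrow\ a; T; \trel\ a$ inside $S$ is unsafe, I pick two witness qubits $q_1, q_2$ outside the syntactic footprint of the entire program $S$---possible precisely because $\qubits$ is taken arbitrarily large---so that $q_1, q_2\in\idle(T)$ and every other constructor of $S$ tensors trivially on them. Unsafety yields some $\cE \in \sem{T[q_1/a]}$ not of the form $\cI_{q_1}\otimes(\cdot)$, while its swap image $\tilde{\cE}\in \sem{T[q_2/a]}$ \emph{is} of that form (since $T[q_2/a]$ never mentions $q_1$); hence $\cE\neq \tilde{\cE}$. Because the surrounding sequential, conditional, and while constructors of $S$ all lift by tensoring with $\cI_{q_1}\otimes \cI_{q_2}$, this $\{q_1,q_2\}$-distinction commutes through them and yields two distinct elements in $\sem{S}$. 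The main obstacle is precisely this preservation of distinction through the infinite convergent sum defining the while-loop semantics; it is handled uniformly by the footprint argument, which is exactly why the theorem requires arbitrarily large $\qubits$ on its right-hand side.
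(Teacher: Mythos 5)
Your forward direction is essentially the paper's: structural induction on \(S\), with the \(\tborrow\) case closed by a renaming/swap-equivariance argument---and in fact your swap lemma makes explicit the step the paper only asserts (``the semantics \(\sem{S'[q/a]}\) are identical for all \(q \in \idle(S')\)''). However, your stated bridge claim is false as a general fact: it is \emph{not} true that every \(\cE \in \sem{T[q_j/a]}\) factors as \(\cI_{q_i}\otimes(\cdot)\) ``because \(q_i\) does not appear in \(T[q_j/a]\).'' Syntactic absence does not give semantic triviality here, because a nested \(\tborrow\) inside \(T\) may nondeterministically instantiate its placeholder with exactly \(q_i\) (the idle set of the inner body contains \(q_i\)) and then act nontrivially on it. The paper repairs precisely this by strengthening the induction hypothesis to ``\(\sem{S}\) acts as the identity on qubits other than \(qubits(S)\),'' a property that follows from \emph{safety} of the nested borrows, not from syntax. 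Since you do invoke safety of the nested \(\tborrow\)s via the IH, your forward case survives once you carry that strengthened invariant explicitly rather than appealing to syntactic footprints.

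The genuine gap is in your backward direction, at the step ``this \(\{q_1,q_2\}\)-distinction commutes through them and yields two distinct elements in \(\sem{S}\).'' First, the claim that the swap image \(\tilde{\cE}\in\sem{T[q_2/a]}\) factors as \(\cI_{q_1}\otimes(\cdot)\) fails for the same nested-borrow reason (if the witness schedule behind \(\cE\) had an inner borrow pick \(q_2\), the schedule behind \(\tilde{\cE}\) picks \(q_1\)); this part is patchable, e.g., by enlarging \(\qubits\) with one genuinely new qubit so the \emph{fixed} witness operation extends by the identity there. But the propagation itself is not patchable: composing with the surrounding program can erase the distinction. Concretely, take \(S \equiv (\tborrow\ a;\ \gateX[a];\ \trel\ a);\ W\) with \(W \equiv \twhile\ \cM[q_0]\ \tdo\ \tskip\ \tend\), where \(M_\TT = I\) and \(M_\FF = 0\). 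Then \(\sem{W} = \{0\}\), hence \(\sem{S} = \{0\}\) is a singleton in every (arbitrarily large) \(\qubits\), even though the borrow is plainly unsafe. The zero map is insensitive to your fresh qubits \(q_1,q_2\), so no footprint argument rescues this step---your remark that the while-loop obstacle ``is handled uniformly by the footprint argument'' is exactly where the plan breaks. Note that the paper never faces this difficulty: despite its \((\Leftarrow)\) label, the appendix argument establishes \(|\sem{S}|>1 \implies\) some borrowed qubit is unsafe (i.e., the contrapositive of the \emph{forward} direction), tracing semantic multiplicity back to a borrow statement; it never pushes a difference forward through a context. Your attempt at the literal converse (unsafe \(\implies |\sem{S}|>1\)) runs into a real obstruction---the example above shows that implication fails outright when the continuation absorbs all states---so no choice of witnesses can complete the argument as planned.
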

\begin{proof}
    (\(\Rightarrow\)) Assume that \(S\) is safe and \(\qubits\) is sufficiently large, ensuring that every \(\tborrow\) statement in \(S\) has at least one option. Consequently, \(S\) will not get stuck, and \(\sem{S}>0\).
    We prove by induction on the structure of \(S\) that \(|\sem{S}|=1\).
    In addition, we strengthen the induction hypothesis by requiring not only that \(|\sem{S}|=1\), but also that \(\sem{S}\) acts as identity on qubits other than \(qubits(S)\).

    The base case is when \(S\equiv \tskip,[q]:=|0\>\) or \(U[\bar{q}]\), by the denotational semantics given in Figure~\ref{fig:denotational}, it trivially holds that \(|\sem{S}|=1\) and for all \(q' \in \qubits\setminus qubits(S)\), \(\sem{S}\) acts as identity on \(q'\).

    If \(S\equiv S_1;S_2\) and \(S\) is safe, then \(S_1,S_2\) are safe programs. 
    By induction hypothesis, \(|\sem{S_1}|=|\sem{S_2}|=1\).
    Therefore, 
    \[
    |\sem{S}|=|\cE_1\circ\cE_2:\cE_1\in \sem{S_2},\cE_2\in\sem{S_1}|=|\sem{S_1}|\cdot|\sem{S_2}|=1.
    \]
    Furthermore, for qubit
    \[
    q\in\idle(S_1;S_2)=\idle(S_1)\cap \idle(S_2),
    \]
    by induction hypothesis we know that \(\sem{S_1}\) and \(\sem{S_2}\) act as identity on \(q\), thus \(\sem{S}\) also acts as identity on \(q\).
    A similar proof holds for \(S\equiv\tif\ \cM[\bar{q}]\ \tthen\ S_1\ \telse\ S_2\) or \(S\equiv \twhile\ \cM[\bar{q}]\ \tdo\ S\ \tend\).

    If \(S \equiv \tborrow\ a\ S'\ \trel\ a\), then, since \(S\) is safe, for every qubit \(q \in \idle(S')\), the program \(S'[q/a]\) is also safe. 
    By the induction hypothesis, this implies \(|\sem{S'[q/a]}| = 1\).  
    Moreover, as \(S\) is safe, each borrowed qubit \(q\) is safely uncomputed in \(S'[q/a]\), and thus \(\sem{S'[q/a]}\) acts as the identity on both \(q\) and all qubits in \(\idle(S')\). 
    It follows that the semantics \(\sem{S'[q/a]}\) are identical for all \(q \in \idle(S')\).
    Therefore, we can conclude that \(|\sem{S}|=|\bigcup_q\sem{S'[q/a]}|=1\).

    \((\Leftarrow)\) We proceed by proving the contrapositive: if \(|\sem{S}| > 1\) and \(\qubits\) is sufficiently large so that every \(\tborrow\) statement in \(S\) has at least two choices, then some qubit borrowed in \(S\) is unsafe.

    According to the denotational semantics in Figure~\ref{fig:denotational}, any non-determinism in \(\sem{S}\) can only originate from the \(\tborrow\) statements within \(S\). In particular, if \(|\sem{S}| > 1\), then there exists a substatement of the form \(\tborrow\ a;\ S'\ \trel\ a\) in \(S\) and two distinct qubits \(q_1, q_2 \in \idle(S')\) such that
    \[
    \sem{S'[q_1/a]} \neq \sem{S'[q_2/a]}.
    \]
    This implies that the borrowed qubit is not safely uncomputed, as otherwise both instantiations would act as the identity on \(q_1\) and \(q_2\), resulting in identical semantics.
\end{proof}

\subsection{Proof of Theorem~\ref{thm:refined}}
\begin{theorem}[Refined Theorems~\ref{thm:state-restoration} and \ref{thm:entanglement-preservation}]
For a \textsf{QBorrow} program \(S\) and a qubit \(q \in \qubits\), let \(n = |\qubits|\) denote the total number of qubits. Then the following conditions are equivalent:
\begin{enumerate}
    \item \(S\) safely uncomputes \(q\);
    \item \eqref{eq:state-restoration} holds for every \(|\psi\> \in \{|0\>, |1\>, |+\>, |+_i\>, |-\>\}\) and every state 
    \(\rho \in \{\rho' \otimes |\psi\>_q\<\psi| : \rho' \in \sB^{\otimes (n-1)}\}\);
    \item \eqref{eq:entanglement-preservation} holds for one additional hypothetical qubit \(\bar{q}' = q'\), the Bell state \(\rho' = |\Phi\>\<\Phi|\), and for every state \(\rho \in \{\rho' \otimes |\Phi\>_{q,q'}\<\Phi| : \rho' \in \sB^{\otimes (n-2)}\}\).
\end{enumerate}
\end{theorem}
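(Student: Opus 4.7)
The plan is to establish the equivalences via the cycle $(1)\Rightarrow(2)$, $(1)\Rightarrow(3)$, $(2)\Rightarrow(1)$, and $(3)\Rightarrow(1)$. The forward directions $(1)\Rightarrow(2)$ and $(1)\Rightarrow(3)$ are immediate: if $\cE=\cI_q\otimes\cE'$ for every $\cE\in\sem{S}$, then the universal state-restoration and entanglement-preservation properties furnished by Theorems~\ref{thm:state-restoration} and~\ref{thm:entanglement-preservation} hold, and these specialize to the restricted families of states prescribed by $(2)$ and $(3)$.

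For $(2)\Rightarrow(1)$, I would proceed in two stages. In the first stage, I exploit the basis property of $\sB$: since the four projectors in $\sB$ are linearly independent and span the four-dimensional complex operator algebra on a single qubit, $\sB^{\otimes(n-1)}$ spans the full $(n-1)$-qubit operator space. For each of the five prescribed pure states $|\psi\>$, the map
\[
F_\psi(\sigma)\triangleq\Tr_{\qubits\setminus\{q\}}\bigl(\cE(\sigma\otimes|\psi\>_q\<\psi|)\bigr)-\Tr\bigl(\cE(\sigma\otimes|\psi\>_q\<\psi|)\bigr)\cdot|\psi\>\<\psi|
\]
is linear in $\sigma$ and vanishes on $\sB^{\otimes(n-1)}$ by hypothesis, hence $F_\psi\equiv 0$ on all $(n-1)$-qubit operators. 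In the second stage I would write each Kraus operator of $\cE$ in block form $K_k=\bigl(\begin{smallmatrix}K_k^{00} & K_k^{01}\\ K_k^{10} & K_k^{11}\end{smallmatrix}\bigr)$ relative to the $q$-basis. The conditions at $|0\>$ and $|1\>$ force $K_k^{01}=K_k^{10}=0$, while the conditions at $|+\>$, $|+_i\>$, and $|-\>$ together impose
\[
\textstyle\sum_k(K_k^{00})^\dagger K_k^{00}\;=\;\sum_k(K_k^{11})^\dagger K_k^{11}\;=\;\sum_k(K_k^{11})^\dagger K_k^{00},
\]
with the last aggregate automatically Hermitian. A direct block expansion then shows that $\Tr_{\qubits\setminus\{q\}}(\cE(\sigma\otimes|\psi\>_q\<\psi|))$ is proportional to $|\psi\>\<\psi|$ for every pure one-qubit state $|\psi\>=\alpha|0\>+\beta|1\>$ and every $\sigma$, which is precisely the hypothesis of Theorem~\ref{thm:state-restoration} and delivers $\cE=\cI_q\otimes\cE'$.

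For $(3)\Rightarrow(1)$, the same linearity argument extends condition $(3)$ to all $\sigma$ on the qubits outside $\{q\}$, giving $(\cE\otimes\cI_{q'})(\sigma\otimes|\Phi\>_{q,q'}\<\Phi|)\big|_{q,q'}=|\Phi\>\<\Phi|$ for every $\sigma$. Introducing the effective one-qubit superoperator $\tilde\cE_\sigma(X)\triangleq\Tr_{\qubits\setminus\{q\}}(\cE(\sigma\otimes X_q))$, this preservation says that the Choi matrix of $\tilde\cE_\sigma$ is a scalar multiple of $|\Phi\>\<\Phi|$; by the Choi--Jamiolkowski isomorphism, $\tilde\cE_\sigma$ is a scalar multiple of the identity superoperator for every $\sigma$, which recovers the hypothesis of Theorem~\ref{thm:state-restoration} and again yields $\cE=\cI_q\otimes\cE'$.

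The main obstacle I anticipate is the block-algebraic computation in the second stage of $(2)\Rightarrow(1)$: one must verify that the three superposition states $|+\>$, $|+_i\>$, and $|-\>$ jointly constrain both the Hermitian and (implicit) anti-Hermitian parts of the cross-aggregate $\sum_k(K_k^{11})^\dagger K_k^{00}$, so that every residual degree of freedom by which $\cE$ could deviate from $\cI_q\otimes\cE'$ is eliminated. The remaining steps reduce to routine applications of the earlier theorems and the standard Choi--Jamiolkowski correspondence.
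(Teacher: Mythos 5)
Your proposal is correct, and both of its backward implications take a genuinely different route from the paper's proof (the forward directions are the same trivial specializations in both). For \((2)\Rightarrow(1)\), the paper performs the same linearity lift over \(\sB^{\otimes(n-1)}\), but then argues at the level of output states: writing \(\cE(\rho\otimes|\psi\>_q\<\psi|)=\rho_\psi\otimes|\psi\>_q\<\psi|\) for the five states, it uses the expansion of \(|-\>\<-|\) over the tomographic basis \(\{|0\>\<0|,|1\>\<1|,|+\>\<+|,|+_i\>\<+_i|\}\) to force the \(\rho_\psi\) to coincide, extends restoration to arbitrary pure \(|\psi\>\) by decomposing \(|\psi\>\<\psi|\) in that basis, and only then reuses the Stinespring argument of Theorem~\ref{thm:state-restoration}. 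Your Kraus-block computation replaces those middle steps and is sound; moreover, your anticipated obstacle dissolves: since the constraints are equalities of trace functionals holding for \emph{all} \(\sigma\) after the lift (full-rank \(\sigma\) such as the maximally mixed state lies in the span of \(\sB^{\otimes(n-1)}\)), the condition at \(|+\>\) alone already forces \(\Tr(P\sigma)=\Tr(Q\sigma)=\Tr(R\sigma)=\Tr(R^{\dagger}\sigma)\) for \(P=\sum_k(K_k^{00})^{\dagger}K_k^{00}\), \(Q=\sum_k(K_k^{11})^{\dagger}K_k^{11}\), \(R=\sum_k(K_k^{11})^{\dagger}K_k^{00}\), hence \(P=Q=R=R^{\dagger}\) as operators—both the Hermitian and anti-Hermitian parts of the cross-aggregate are pinned without invoking \(|+_i\>\) or \(|-\>\), and you can even finish directly via \(\sum_k(K_k^{00}-K_k^{11})^{\dagger}(K_k^{00}-K_k^{11})=P+Q-R-R^{\dagger}=0\), giving \(K_k=I_q\otimes K_k^{00}\) without a second pass through Theorem~\ref{thm:state-restoration}. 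This is something your route buys that the paper's does not: the paper needs all five states because its basis-expansion argument does (and its tying step is itself delicate—in the expansion \(|-\>\<-|=|0\>\<0|+|1\>\<1|-|+\>\<+|\) the operator \(|+_i\>\<+_i|\) carries coefficient zero, so equating \(\rho_{+_i}\) with the others needs an extra word there), whereas your argument exposes the extra states as redundant constraints. For \((3)\Rightarrow(1)\), the paper argues rigidity at the Kraus level: positivity forces each \((E_k\otimes I_{q'})(|i\>\otimes(|00\>+|11\>)_{q,q'})\) into \(\cH\otimes\mathrm{span}\{|00\>+|11\>\}\), and projecting with \(\<j|_{q'}\) yields \(E_k=E_k'\otimes I_q\); your Choi--Jamiolkowski packaging of the same fact is equally valid and arguably cleaner. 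The one step you should make explicit in both backward directions is that a reduced-state conclusion (e.g.\ \(\tilde\cE_\sigma\propto\mathrm{id}\) giving a pure marginal \(|\psi\>\<\psi|\) on \(q\)) upgrades to the product form \(\cE(\sigma\otimes|\psi\>\<\psi|)=\rho'\otimes|\psi\>\<\psi|\) demanded by the Stinespring step, via the standard fact that purity of a marginal forces factorization—the paper makes the same implicit move.
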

\begin{proof}
    (\(1\implies 2\)) Similar to the proof of (\(\Rightarrow\)) direction in Theorem~\ref{thm:state-restoration}.

    (\(2\implies 1\)) Let \(\cH=\bigotimes_{q'\in\qubits\setminus \{q\}}\cH_{q'}\). 
    Given that \(\sB^{\otimes (n-1)}\) forms a basis of \(\cL\pare{\cH}\), it suffices to show that for a quantum operation \(\cE\), if the following condition holds for all states \(\rho \in \cD(\cH)\) and \(|\psi\> \in \{|0\>, |1\>, |+\>, |+_i\>, |-\>\}\):  
    \[
    \cE(\rho \otimes |\psi\>_q\<\psi|) = \rho' \otimes |\psi\>_q\<\psi| \quad \text{for some } \rho',
    \]  
    then it follows that \(\cE = \cI_q \otimes \cE'\) for some \(\cE'\).

    We know that \(\{|0\>\<0|,|1\>\<1|,|+\>\<+|,|+_i\>\<+_i|\}\) forms a basis of \(\cL(\cH_q)\) and
    \[
    \begin{aligned}
        &\cE(\rho\otimes |0\>_q\<0|)=\rho_0\otimes |0\>_q\<0|,\\
        &\cE(\rho\otimes |1\>_q\<1|)=\rho_1\otimes |1\>_q\<1|,\\
        &\cE(\rho\otimes |+\>_q\<+|)=\rho_+\otimes |+\>_q\<+|,\\
        &\cE(\rho\otimes |+_i\>_q\<+_i|)=\rho_{+_i}\otimes |+_i\>_q\<+_i|,\\
        & \cE(\rho\otimes |-\>_q\<-|)=\rho_-\otimes |-\>_q\<-|.
    \end{aligned}
    \]
    Observe that \(|-\>\<-|\) is an operator with non-zero overlap with each of the other four basis operators \(|0\>\<0|, |1\>\<1|, |+\>\<+|\), \(|+_i\>\<+_i|\). Hence, it follows that \(\rho_0 = \rho_1 = \rho_+ = \rho_{+_i} = \rho_-\).  
    In other words, \(\rho_0, \rho_1, \rho_+, \rho_{+_i}\) are independent of the choice of \(|\psi\>\) and depend solely on \(\rho\).

    Since \(\{|0\>\<0|,|1\>\<1|,|+\>\<+|,|+_i\>\<+_i|\}\) forms a basis of \(\cL(\cH_q)\), we know that for all one-qubit pure states \(|\psi\>\in \cH_q\),
    \[
    \cE(\rho\otimes |\psi\>_q\<\psi|)=\rho'\otimes |\psi\>_q\<\psi|\mbox{ for some }\rho',
    \]
    since one may decompose \(|\psi\>_q\<\psi|\) into the linear combination of the four basis operators.

    By applying the proof of the \(\Leftarrow\) direction in Theorem~\ref{thm:state-restoration}, we conclude that \(\cE = \cI_q \otimes \cE'\)

    (\(1\implies 3\)) Similar to the proof of (\(\Rightarrow\)) direction in Theorem~\ref{thm:entanglement-preservation}.

    (\(3\implies 1\)) For each \(\cE\in\sem{S}\), consider the Kraus representation \(\cE(\rho)=\sum_kE_k\rho E_k^\dagger\) and an orthonormal basis \(\{|i\>\}\) of \(\bigotimes_{q''\in\qubits\setminus\{q\}}\cH_{q''}\), we have 
    \[
    (\cE\otimes\cI_{q'})(|i\>\<i|\otimes |\Phi\>_{q,q'}\<\Phi|)=\rho'\otimes |\Phi\>_{q,q'}\<\Phi|\mbox{ for some }\rho',
    \]
    where the left-hand side equals to 
    \[
    \begin{aligned}
        &(\cE\otimes\cI_{q'})(|i\>\<i|\otimes |\Phi\>_{q,q'}\<\Phi|) \\
        =& \sum_k (E_k\otimes I_{q'})(|i\>\<i|\otimes |\Phi\>_{q,q'}\<\Phi|)(E_k^\dagger\otimes I_{q'})\\
        =&\frac{1}{2}\sum_k (E_k\otimes I_{q'})(|i\>\<i|\otimes (|00\>+|11\>)_{q,q'}(\<00|+\<11|))(E_k^\dagger\otimes I_{q'}).\\
    \end{aligned}
    \]
    Therefore, for each \(i,k\), there exists a vector \(|\beta_{i,k}\>\) such that 
    \[
    (E_k\otimes I_{q'})(|i\>\otimes (|00\>+|11\>)_{q,q'})=|\beta_{i,k}\>\otimes (|00\>+|11\>)_{q,q'}.
    \]
    For $j=0,1$, by multiplying $I\otimes \<j|_{q'}$ on both sides of the equation we can derive that
    $$
    E_k(|i\>\otimes |j\>_q)=|\beta_{k,i}\>\otimes|j\>_q.
    $$
    Therefore, $E_k=\sum_{i,j}|\beta_{k,i}\>\<i|\otimes|j\>_q\<j|=E_k'\otimes I_q$ for some $E_k'$.
    And we can conclude that $\cE=\cE'\otimes\cI_q$ for some $\cE'$.

    In fact, the proof remains valid if the state \(|\Phi\rangle\) is replaced with an arbitrary entangled state of the form \(\alpha|00\rangle + \beta|11\rangle\), where \(|\alpha|, |\beta| \neq 0\).
    
\end{proof}

\subsection{Proof of Theorem~\ref{thm:safe-uncomp}}
\begin{theorem}
    A quantum circuit \(C\) implementing a classical function (i.e., a quantum program composed solely of \(\gateX\) and multi-controlled \(\mathrm{NOT}\) gates) safely uncomputes a dirty qubit \(q \in \qubits\) if and only if the following conditions hold for all states \(\rho\):
    \[
    \begin{aligned}
        & \rho|_q = |0\>\<0| &&\implies \sem{C}(\rho)|_q = |0\>\<0|,\mbox{ and} \\
        & \rho|_q = |+\>\<+|  &&\implies \sem{C}(\rho)|_q = |+\>\<+|, \\
    \end{aligned}
    \]
    where, since \(C\) contains no \(\tborrow\) statements, \(\sem{C}\) denotes a single deterministic quantum operation without ambiguity.
\end{theorem}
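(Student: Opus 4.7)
The forward direction is immediate from Theorem~\ref{thm:state-restoration}, which guarantees restoration of every pure state---in particular $|0\rangle$ and $|+\rangle$. For the backward direction, the key structural fact is that a classical circuit $C$ (built only from $\gateX$ and multi-controlled NOT gates, with no measurements or initializations) induces a single unitary $U$ that permutes the computational basis: $U|x\rangle = |\pi(x)\rangle$ for some permutation $\pi$ of $\{0,1\}^n$. The goal is then to force $U = V \otimes I_q$ from the two restoration hypotheses, which gives safe uncomputation by Definition~\ref{def:safe-uncomp}.

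\paragraph{Step 1: Split coordinates and use the $|0\rangle$ hypothesis.} I would write each classical input as $(y, b) \in \{0,1\}^{n-1} \times \{0,1\}$, where $b$ is the bit at qubit $q$, and decompose $\pi(y, b) = (\pi_y(y,b), \pi_q(y,b))$. Because $|0\rangle\langle 0|$ is pure, any $\rho$ with $\rho|_q = |0\rangle\langle 0|$ factorizes as $\sigma \otimes |0\rangle_q\langle 0|$, so by convexity and linearity it suffices to plug in $\rho = |y\rangle\langle y| \otimes |0\rangle_q\langle 0|$. Computing $U\rho U^\dagger$ and tracing out everything except $q$ yields $|\pi_q(y,0)\rangle\langle \pi_q(y,0)|$; equating this with $|0\rangle\langle 0|$ forces $\pi_q(y,0) = 0$ for every $y$. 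Since $\pi$ is a bijection, the $2^{n-1}$-element set $\{(y,0)\}$ maps onto $\{(y',0)\}$, and hence $\pi_q(y,1) = 1$ for every $y$ as well. Thus $\pi_q(y,b) = b$.

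\paragraph{Step 2: Use the $|+\rangle$ hypothesis.} The same factorization argument reduces the $|+\rangle$-restoration condition to inputs $|y\rangle\langle y| \otimes |+\rangle_q\langle +|$. Using Step~1, I compute
\begin{equation*}
U(|y\rangle \otimes |+\rangle_q) = \tfrac{1}{\sqrt{2}}\bigl(|\pi_y(y,0)\rangle|0\rangle_q + |\pi_y(y,1)\rangle|1\rangle_q\bigr),
\end{equation*}
and take the partial trace over all qubits except $q$. The resulting reduced operator has diagonal $\tfrac{1}{2}I$ but off-diagonal entries equal to $\tfrac{1}{2}\langle \pi_y(y,1)|\pi_y(y,0)\rangle$. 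Demanding that this equal $|+\rangle\langle +| = \tfrac{1}{2}(I + \tX)$ forces $\langle \pi_y(y,1)|\pi_y(y,0)\rangle = 1$, i.e.\ $\pi_y(y,0) = \pi_y(y,1)$.

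\paragraph{Combining the steps and anticipated obstacles.} Steps 1 and 2 together yield $\pi(y,b) = (\pi'(y), b)$ for some permutation $\pi'$ of $\{0,1\}^{n-1}$, and therefore $U = V \otimes I_q$ where $V$ is the permutation unitary implementing $\pi'$; this is exactly the safe uncomputation condition. The main obstacle I anticipate is the careful bookkeeping that the hypotheses over \emph{all} $\rho$ can be reduced to the specific tensor-product, computational-basis inputs above. This is handled by two observations: (i) if a reduced state on $q$ is a pure one-qubit state $|\psi\rangle\langle \psi|$, then $\rho$ must factorize as $\sigma \otimes |\psi\rangle_q\langle\psi|$, so only such tensor products need to be considered; and (ii) the conditions are linear in $\sigma$, hence checking them on basis states $\sigma = |y\rangle\langle y|$ suffices. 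A secondary point worth making explicit is that $\sem{C}$ is unambiguously a single unitary because $C$ contains no measurements, initializations, or $\tborrow$ statements, so no sources of nondeterminism or trace decrease arise.
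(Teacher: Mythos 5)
Your proof is correct, but it takes a genuinely different route from the paper's. Both agree on the easy half (the forward implication is just Theorem~\ref{thm:state-restoration} specialized to $|\psi\rangle \in \{|0\rangle, |+\rangle\}$), and both exploit that $\sem{C}$ is conjugation by a permutation matrix $U$; the difference lies in how the backward implication pins down $U = V \otimes I_q$. The paper tests the two hypotheses on a \emph{single} input, $\rho$ maximally mixed on $\qubits\setminus\{q\}$, writes $U$ in $2\times 2$ block form $\bigl(\begin{smallmatrix} A & B \\ C & D \end{smallmatrix}\bigr)$ with respect to qubit $q$, and extracts the trace equations $\langle C,C\rangle = 0$ and $\langle A,A\rangle + \langle B,B\rangle + 2\langle A,B\rangle = n$, using nonnegativity of these inner products (valid because a permutation matrix has $0/1$ entries) to force $B = C = \mathbf{0}$, and finally $\langle A,D\rangle = n$ to force $A = D$. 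You instead test on computational-basis inputs $|y\rangle\langle y| \otimes |0\rangle_q\langle 0|$ and $|y\rangle\langle y| \otimes |+\rangle_q\langle +|$ and argue combinatorially on the permutation $\pi$: the $|0\rangle$ condition plus bijectivity yields bit preservation $\pi_q(y,b) = b$, and the off-diagonal coherence term $\tfrac{1}{2}\langle \pi_y(y,1)|\pi_y(y,0)\rangle$ of the reduced state forces $\pi_y(y,0) = \pi_y(y,1)$. Your version is more elementary and transparent: it makes visible exactly why restoring $|+\rangle$ encodes independence of the other qubits' outputs from $q$'s bit, and your two conclusions are precisely the semantic counterparts of the Boolean conditions \eqref{eq:cond-1} and \eqref{eq:cond-2}, so your argument dovetails directly with the reduction in Theorem~\ref{thm:boolean}. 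The paper's version buys compactness (one test state, one linear-algebra computation) and incidentally shows that the maximally mixed ancilla state alone suffices as a probe. One cosmetic point: in Step~1 the appeal to ``convexity and linearity'' is unnecessary---the hypothesis is universally quantified over $\rho$, so you are merely instantiating it at basis states---but this does not affect correctness.
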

\begin{proof}
    Since the semantics \(\sem{C}\)  of the circuit implementing a classical function is a unitary transformation \(\sem{C}(\rho)=U\rho U^\dagger\) where \(U\) is a permutation operator, it suffices to show that 
    \[
    U=I_q\otimes V\mbox{ for some }V
    \]
    if and only if for all states \(\rho\), 
    \[ 
    \begin{array}{l}
        \Tr_{\qubits\setminus\{q\}}\pare{U(|0\>_q\<0|\otimes \rho)U^\dagger}= |0\>\<0|,\mbox{ and }\\
        \Tr_{\qubits\setminus\{q\}}\pare{U(|+\>_q\<+|\otimes \rho)U^\dagger}= |+\>\<+|.
    \end{array}
    \]

    (\(\Rightarrow\)) Similar to the proof of (\(\Rightarrow\)) direction in Theorem~\ref{thm:state-restoration}.

    (\(\Leftarrow\)) We take \(\rho=I\) and omit the normalization factor.
    Assume that 
    \[
    U=\begin{pmatrix}
        A_{n\times n} & B_{n\times n} \\
        C_{n\times n} & D_{n\times n}
    \end{pmatrix}_{2n\times 2n},
    \]
    then it holds that
    \[
    \begin{array}{l}
        U(|+\>_q\<+|\otimes I)U^\dagger=\begin{pmatrix}
        (A+B)(A+B)^\dagger & (A+B)(C+D)^\dagger\\
        (C+D)(A+B)^\dagger & (C+D)(C+D)^\dagger
        \end{pmatrix},\\
        U(|0\>_q\<0|\otimes I)U^\dagger=\begin{pmatrix}
        AA^\dagger & AC^\dagger\\
        CA^\dagger & CC^\dagger
        \end{pmatrix},
    \end{array}
    \]
    where coefficient $\frac{1}{2}$ is omitted here. 
    Because
    \[
    Tr_{\qubits\setminus\{q\}}\begin{pmatrix}
        A & B\\
        C & D
    \end{pmatrix}=\begin{pmatrix}
        Tr(A) & Tr(B)\\
        Tr(C) & Tr(D)
    \end{pmatrix}/n,
    \]
    and 
    \[
    \begin{array}{l}
    Tr_{\qubits\setminus\{q\}}(U(|+\>_q\<+|\otimes I)U^\dagger)=|+\>_q\<+|,\\
    Tr_{\qubits\setminus\{q\}}(U(|0\>_q\<0|\otimes I)U^\dagger)=|0\>_q\<0|,
    \end{array}
    \]
    therefore
    $$
    \begin{cases}
        \<A,A\>=n\\
        \<C,C\>=0\\
        \<A,A\>+\<B,B\>+2\<A,B\>=n\\
        \<A,C\>+\<A,D\>+\<B,C\>+\<B,D\>=n\\
    \end{cases},
    $$
    where $\<A,B\>\triangleq Tr(A^\dagger B)=\sum_{i,j=1}^n A_{i,j}B_{i,j}$ is a well-known inner product over the space of linear operators. From the second equation, one knows that 
    $$
    C=\mathbf{0}.
    $$
    Notice that $U$ is a permutation matrix. Thus $$0\leq \<B,B\>,\<A,B\>\leq n,$$ and from the third equation one knows $\<B,B\>=0$, implying that
    $
    B=\mathbf{0}.
    $
    From the last equation, one can verify that $\<A,D\>=n$, and therefore $A=D$.
    In conclusion,
    $$
    U=\begin{pmatrix}
    A & \mathbf{0}\\
    \mathbf{0} & A
    \end{pmatrix}=I_q\otimes A.
    $$
\end{proof}

\subsection{Proof of Theorem~\ref{thm:boolean}}
\begin{theorem}
    A quantum circuit \(C\) implementing a classical function safely uncomputes a dirty qubit \(q\) if and only if both Boolean formulas \eqref{eq:cond-1} and \eqref{eq:cond-2} are unsatisfiable. 
\end{theorem}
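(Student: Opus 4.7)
}
The plan is to invoke Theorem~\ref{thm:safe-uncomp} so that safe uncomputation of \(q\) is equivalent to two state-restoration conditions: \(\sem{C}\) restores \(|0\>\<0|\) on \(q\) and \(\sem{C}\) restores \(|+\>\<+|\) on \(q\). I would then translate these two conditions, one at a time, into the unsatisfiability of \eqref{eq:cond-1} and \eqref{eq:cond-2} respectively. To make the translation precise, I first establish by induction on the circuit structure (\(\gateX\) and multi-controlled NOT gates only) the following semantic lemma: if the input is the basis state \(|x_1\cdots x_n\>\), then the output of \(C\) is \(|b_{q_1}(x)\cdots b_{q_n}(x)\>\), where the \(b_{q_i}\) are the Boolean formulas constructed by the update rules. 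The induction is immediate at each gate since the update rules mirror the action of \(\gateX\) and controlled NOT on computational basis states.

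For the \(|0\>\)-restoration clause, any density operator \(\rho\) with \(\rho|_q=|0\>\<0|\) factorizes as \(|0\>_q\<0|\otimes \rho'\), so by linearity it suffices to check inputs of the form \(|0\>_q\<0|\otimes|x\>\<x|\) for computational basis \(|x\>\) on \(\qubits\setminus\{q\}\). By the semantic lemma, the output of \(q\) on such an input is \(|b_q(0,x)\>\<b_q(0,x)|\), which equals \(|0\>\<0|\) for every \(x\) iff \(b_q(0,x)=0\) for every assignment, i.e.\ \(q=0\Rightarrow b_q=0\). By the truth table of implication this is exactly the unsatisfiability of \(\neg(b_q\to q)\), giving the equivalence with \eqref{eq:cond-1}.

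For the \(|+\>\)-restoration clause, the same purity-of-marginal argument forces \(\rho = |+\>_q\<+|\otimes\rho'\), so by linearity I would check inputs of the form \(|+\>_q\<+|\otimes|x\>\<x|\). Writing \(\pi\) for the underlying permutation of \(C\), let \(\pi(0,x)=(y_0,z_0)\) and \(\pi(1,x)=(y_1,z_1)\). Then \(U(|+\>_q\otimes|x\>)=\tfrac{1}{\sqrt2}(|y_0\>|z_0\>+|y_1\>|z_1\>)\), and a direct computation of the reduced state gives
\[
\tfrac12\bigl(|y_0\>\<y_0|+|y_1\>\<y_1|+\delta_{z_0,z_1}(|y_0\>\<y_1|+|y_1\>\<y_0|)\bigr).
\]
Matching this with \(|+\>\<+|\) forces \(z_0=z_1\) (the off-diagonals must be present) and then \(\{y_0,y_1\}=\{0,1\}\); note that \(y_0\neq y_1\) under \(z_0=z_1\) is automatic from injectivity of \(\pi\). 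The condition \(z_0=z_1\) for every \(x\) is exactly \(b_{q'}(0,x)=b_{q'}(1,x)\) for every \(q'\neq q\) and every \(x\), i.e.\ the formula \(b_{q'}[0/q]\oplus b_{q'}[1/q]\) is identically false, which is the unsatisfiability of \eqref{eq:cond-2}. Conversely, if \eqref{eq:cond-2} is unsatisfiable then \(z_0=z_1\) for every \(x\), and the same density-matrix computation shows \(|+\>\<+|\) is restored.

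The main obstacle is the \(|+\>\) case: one must carefully track the off-diagonal terms, justify the separation-into-basis-state reduction via linearity (since the relevant property is an equality of density operators, which is linear in the input), and observe that once the diagonal condition \(z_0=z_1\) is secured, bijectivity of \(\pi\) automatically supplies \(y_0\neq y_1\) so no extra Boolean formula is needed for the \(q\)-component itself. The other routine ingredients---the inductive semantic lemma and the standard fact that a pure marginal forces a product state---can be handled briefly.
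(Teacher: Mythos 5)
Your proposal is correct and follows essentially the same route as the paper: invoke Theorem~\ref{thm:safe-uncomp}, reduce each restoration condition to computational-basis behavior captured by the Boolean formulas \(b_q\), and translate the \(|0\>\)- and \(|+\>\)-clauses into the unsatisfiability of \eqref{eq:cond-1} and \eqref{eq:cond-2} respectively. If anything, your treatment of the \(|+\>\) case is slightly more careful than the paper's---it explicitly tracks the off-diagonal term \(\delta_{z_0,z_1}\) in the reduced state and uses injectivity of the permutation to get \(y_0\neq y_1\) for free, where the paper simply splits the superposition into the two \(q\)-sectors and asserts the equivalence.
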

\begin{proof}
    From theorem~\ref{thm:safe-uncomp}, we know that safe uncomputation is equivalent to the restoration of two specific states: \(|0\>\<0|\) and \(|+\>\<+|\). 
    Besides, since \(C\) is a circuit implementing some unitary transformation \(U\), it can be verified that the condition 
    \[
    \forall \rho,\ \sem{C}(\rho\otimes |0\>_q\<0|)=\rho'\otimes |0\>_q\<0|\mbox{ for some }\rho'
    \]
    is equivalent to  
    \begin{equation}\label{eq:ac1}
    U|a_1...a_n\>|0\>_q=|a_1'...a_n'\>|0\>_q,
    \end{equation}
    where \(|a_1...,a_n\>\) is a basis pure state of \(\qubits\setminus\{q\}\) in computational basis.
    It is evident that \eqref{eq:ac1} is equivalent to the unsatisfiability of the Boolean formula in \eqref{eq:cond-1}, as the application of \(U\) to the computational basis state can be fully described by the Boolean formulas \(b_q\).

    Similarly, for the condition 
    \[
    \forall \rho,\ \sem{C}(\rho\otimes |+\>_q\<+|)=\rho'\otimes |+\>_q\<+|\mbox{ for some }\rho',
    \]
    it suffices to verify that 
    \begin{equation}\label{eq:ac2}
    U|a_1...a_n\>(|0\>+|1\>)_q=|a_1'...a_n'\>(|0\>+|1\>)_q.
    \end{equation}
    In other words, if considered separately:
    \[
    \begin{cases}
        U|a_1...a_n\>|0\>_q=|a_1'...a_n'\>|0\>_q,\\
        U|a_1...a_n\>|1\>_q=|a_1''...a_n''\>|1\>_q,
    \end{cases}
    \]
    the condition \eqref{eq:ac2} is equivalent to check that \(a_1'...a_n'=a_1''...a_n''\) for all \(a_1...a_n\).
    This is equivalent to the unsatisfiability of the Boolean formula in \eqref{eq:cond-2}.
\end{proof}
\section{Artifact Appendix}
This artifact presents the implementation of a restricted version of \textsf{QBorrow}, a quantum programming language that supports borrowing dirty qubits, along with a verification algorithm for ensuring the safe uncomputation of dirty qubits.
This artifact includes the grammar source code of \textsf{QBorrow}, a benchmark suite, and detailed instructions for installing and executing the verification experiments from the source code.

\subsection{Artifact Check-list}
The implementation primarily includes the following files
\begin{itemize}
    \item \texttt{QBorrow/CMakeLists.txt}: the CMake configuration file for verifying the presence of required files and libraries, including the ANTLR4 jar file, the ANTLR4 C++ runtime, and the C++ APIs for CVC5 and Bitwuzla.
    \item \texttt{QBorrow/grammar/QBorrow.g4}: the grammar of \textsf{QBorrow} in ANTLR g4 format.
    \item \texttt{QBorrow/include/frontend/antlr-gen/*}, 
    
    \texttt{QBorrow/src/frontend/antlr-gen/*}: the ANTLR-generated C++ parser and lexer for \textsf{QBorrow}. Provided the grammar file, the CMake script can automatically generate these files if they are not present.
    \item \texttt{QBorrow/examples/adder.qbr}, 
    
    \texttt{QBorrow/examples/mcx.qbr}: two benchmark programs implemented in \textsf{QBorrow} for verification experiments.

    \item Other files in \texttt{QBorrow/include} and \texttt{QBorrow/src} implement the AST construction, Boolean formula generation and SMT solving for the verification algorithm.
\end{itemize}

\subsection{Instructions for Installation and Execution}
To build from source, one needs to install the following dependencies:
\begin{itemize}
    \item the jar file for ANTLR4;
    \item the C++ runtime for ANTLR4;
    \item the C++ APIs for CVC5 and Bitwuzla. 
\end{itemize}
One may manually generate the ANTLR4 C++ parser and lexer from the grammar file using the command
\begin{lstlisting}[style=qborrow]
java -jar /path/to/antlr-4.13.x-complete.jar \ 
-Dlanguage=Cpp -visitor -o \ 
/path/to/output/dir /path/to/QBorrow.g4
\end{lstlisting}
while the generation can also be automatically handled by the CMake script during the build process.

To build the project from source, one may need either hard-code the paths of the above dependencies in 

\noindent
\texttt{QBorrow/CMakeLists.txt}, or explicitly provide them when executing the CMake command, e.g.,
\begin{lstlisting}[style=qborrow]
mkdir build && cd build

cmake -DANTLR4_JAR_PATH\ 
    =/path/to/antlr-4.13.x-complete.jar \ 
-DANTLR4_RUNTIME_PREFIX_PATH\ 
    =/path/to/antlr4/runtime/cpp \ 
-DCVC5_PREFIX_PATH=/path/to/cvc5 \ 
-DBITWUZLA_PREFIX_PATH=/path/to/bitwuzla \
..
\end{lstlisting}
Once the configuration is done, one can build the project using the command
\begin{lstlisting}[style=qborrow]
make
\end{lstlisting}
and there should be an executable file named \texttt{qborrow} in the build directory.
For MacOS users, there is already a pre-compiled binary file in the \texttt{QBorrow/build} directory, which can be directly executed without building from source.
To run the verification experiments on the benchmark programs, one can use the command
\begin{lstlisting}[style=qborrow]
make adder 
\end{lstlisting}
or manually execute the command
\begin{lstlisting}[style=qborrow]
./qborrow ../examples/adder.qbr
\end{lstlisting}

\subsection{Grammar of \textsf{QBorrow} in ANTLR g4}
The grammar file of \textsf{QBorrow} for ANTLR4 is provided below.  
It specifies the syntax of the quantum programming language, encompassing constructs for borrowing dirty qubits, allocating and releasing qubits, as well as fundamental quantum operations such as X, CNOT, and CCNOT gates.  
Additionally, to enhance programmer convenience, it supports basic for loops and primitive arrays of qubits.

\begin{lstlisting}[style=qborrow]
grammar QBorrow;

program : statement+ EOF;

statement 
    : 'let' ID '=' expr ';'
    | 'borrow' reg ';'
    | 'borrow@' reg ';'
    | 'alloc' reg ';'
    | 'release' ID ';'
    | 'X' '[' reg ']' ';'
    | 'CNOT' '[' reg ',' reg ']' ';'
    | 'CCNOT' '[' reg ',' reg ',' reg ']' ';'
    | 'for' ID '=' expr 'to' expr '{' statement* '}'
    ;

reg
    : ID '[' expr ']'
    | ID 
    ;

// arithmetic expressions
expr 
    : expr ADD term 
    | expr SUB term
    | SUB term
    | ADD term
    | term 
    ;


term 
    : term MUL factor 
    | factor
    ;

factor
    : NUMBER
    | ID
    | '(' expr ')'
    ;

ADD : '+';
SUB : '-';
MUL : '*';

ID : [a-zA-Z_] [a-zA-Z0-9_]* ;
NUMBER : [0-9]+ ;
WS : [ \t\r\n]+ -> skip ;

LINE_COMMENT  : '//' ~[\r\n]*     -> skip ;
BLOCK_COMMENT : '/*' .*? '*/'     -> skip ;
\end{lstlisting}

\subsection{Benchmark Programs}
The two benchmark programs are the implementation of the adder circuit in~\cite{haner2016factoring} and the multi-controlled NOT gate (MCX) circuit in~\cite{Gidney-blog}.

Specifically, the adder circuit is illustrated in Figure~\ref{fig:circuit-adder}, and its implementation in \textsf{QBorrow} can be found in the file \texttt{QBorrow/examples/adder.qbr} as shown below:
\begin{lstlisting}[style=qborrow]
 // adder.qbr
 let n = 50;        // number of qubits
 borrow@ q[n];      // skip verification
 borrow a[n - 1];   // dirty qubits
 CNOT[a[n - 1], q[n]];
 for i = (n - 1) to 2 {
     CNOT[q[i], a[i]];
     X[q[i]];
     CCNOT[a[i - 1], q[i], a[i]];
 }
 CNOT[q[1], a[1]];
 for i = 2 to (n - 1) {
     CCNOT[a[i - 1], q[i], a[i]];
 }
 CNOT[a[n - 1], q[n]];
 X[q[n]];

 // reverse the circuit to uncompute
 for i = (n - 1) to 2 {
     CCNOT[a[i - 1], q[i], a[i]];
 }
 CNOT[q[1], a[1]];
 for i = 2 to (n - 1) {
     CCNOT[a[i - 1], q[i], a[i]];
     X[q[i]];
     CNOT[q[i], a[i]];
 }
    \end{lstlisting}

\begin{figure*}[t]
    \centering
    \def\udots{\cdot^{\cdot^{\cdot}}}
    \begin{quantikz}[row sep = 0.7em, column sep = 0.3em]
        \lstick{\(q_1\)}     & \qw      & \qw      & \qw      & \qw      &\qw& \cdots && \qw      & \qw      & \qw      & \ctrl{1} & \qw      &\qw& \cdots && \qw      & \qw      & \qw      &\qw&\cdots&& \qw      & \ctrl{1} & \qw      & \qw      & \qw      &\qw&\cdots&&\qw      & \qw      & \qw      &\qw& \rstick{\(q_1\)}\\
        \lstick{\(a_1\)}     & \qw      & \qw      & \qw      & \qw      &\qw& \cdots && \qw      & \qw      & \ctrl{2} & \targ{}  & \ctrl{2} &\qw& \cdots && \qw      & \qw      & \qw      &\qw&\cdots&& \ctrl{2} & \targ{}  & \ctrl{2} & \qw      & \qw      &\qw&\cdots&&\qw      & \qw      & \qw      &\qw& \rstick{\(a_1\)}\\
        \lstick{\(q_2\)}     & \qw      & \qw      & \qw      & \qw      &\qw& \cdots && \ctrl{1} & \targ{}  & \ctrl{1} & \qw      & \ctrl{1} &\qw& \cdots && \qw      & \qw      & \qw      &\qw&\cdots&& \ctrl{1} & \qw      & \ctrl{1} & \targ{}  & \ctrl{1} &\qw&\cdots&&\qw      & \qw      & \qw      &\qw& \rstick{\(q_2\)}\\
        \lstick{\(a_2\)}     & \qw      & \qw      & \qw      & \qw      &\qw& \cdots && \targ{}  & \qw      & \targ{}  & \qw      & \targ{}  &\qw& \cdots && \qw      & \qw      & \qw      &\qw&\cdots&& \targ{}  & \qw      & \targ{}  & \qw      & \targ{}  &\qw&\cdots&&\qw      & \qw      & \qw      &\qw& \rstick{\(a_2\)}\\
        \lstick{\(\vdots\)}  &          &          &          &          &   & \udots &&          &          &          &          &          &   & \ddots &&          &          &          &   &\udots&&          &          &          &          &          &   &\ddots&&         &          &          &   & \rstick{\(\vdots\)}\\
        \lstick{\(a_{n-2}\)} & \qw      & \qw      & \qw      & \ctrl{2} &\qw& \cdots && \qw      & \qw      & \qw      & \qw      & \qw      &\qw& \cdots && \ctrl{2} & \qw      & \ctrl{2} &\qw&\cdots&& \qw      & \qw      & \qw      & \qw      & \qw      &\qw&\cdots&&\ctrl{2} & \qw      & \qw      &\qw& \rstick{\(a_{n-2}\)}\\
        \lstick{\(q_{n-1}\)} & \qw      & \ctrl{1} & \targ{}  & \ctrl{1} &\qw& \cdots && \qw      & \qw      & \qw      & \qw      & \qw      &\qw& \cdots && \ctrl{1} & \qw      & \ctrl{1} &\qw&\cdots&& \qw      & \qw      & \qw      & \qw      & \qw      &\qw&\cdots&&\ctrl{1} & \targ{}  & \ctrl{1} &\qw& \rstick{\(q_{n-1}\)}\\
        \lstick{\(a_{n-1}\)} & \ctrl{1} & \targ{}  & \qw      & \targ{}  &\qw& \cdots && \qw      & \qw      & \qw      & \qw      & \qw      &\qw& \cdots && \targ{}  & \ctrl{1} & \targ{}  &\qw&\cdots&& \qw      & \qw      & \qw      & \qw      & \qw      &\qw&\cdots&&\targ{}  & \qw      & \targ{}  &\qw& \rstick{\(a_{n-1}\)}\\
        \lstick{\(q_n\)}     & \targ{}  & \qw      & \qw      & \qw      &\qw& \cdots && \qw      & \qw      & \qw      & \qw      & \qw      &\qw& \cdots && \qw      & \targ{}  & \qw      &\qw&\cdots&& \qw      & \qw      & \qw      & \qw      & \qw      &\qw&\cdots&&\qw      & \qw      & \qw      &\qw& \rstick{\(\tilde{q_{n-1}}\)}
    \end{quantikz}
    \caption{Circuit for the adder program adapted from~\cite{haner2016factoring}.}
    \label{fig:circuit-adder}

    ~\\
    \begin{tabular}{c|ccccccc}
        \mbox{Duration (s)} & 50 qubits & 75 qubits & 100 qubits & 125 qubits & 150 qubits & 175 qubits & 200 qubits \\ \hline
        \mbox{CVC5} & 4 & 24 & 71 & 171 & 365 & 751 & 1069 \\
        \mbox{Bitwuzla} & 3 & 12 & 29 & 98 & 158 & 248 & 313
    \end{tabular}
    \caption{Verification time for the adder program with different qubit numbers using CVC5 and Bitwuzla.}
    \label{fig:exp-adder}
    ~\\
    \begin{tabular}{c|ccccccc}
        \mbox{Duration (s)} & 499 qubits & 999 qubits & 1499 qubits & 1999 qubits & 2499 qubits & 2999 qubits & 3499 qubits \\ \hline
        \mbox{CVC5} & 0 & 1 & 4 & 7 & 11 & 17 & 27 \\
        \mbox{Bitwuzla} & 3 & 16 & 35 & 61 & 115 & 163 & 239
    \end{tabular}
    \caption{Verification time for the MCX program with different qubit numbers using CVC5 and Bitwuzla.}
    \label{fig:exp-mcx}
\end{figure*}

The second benchmark program implements the multi-controlled NOT gate (MCX) circuit using dirty qubits, as illustrated in the following Figure~\ref{fig:circuit-mcx}.
\begin{figure}[H]
    \centering
    \includegraphics[width=\linewidth]{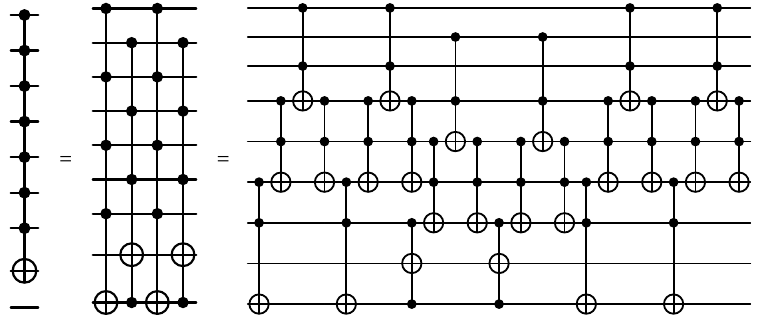}
    \caption{Circuit for the MCX program adapted from~\cite{Gidney-blog}.}
    \label{fig:circuit-mcx}
\end{figure}

Its implementation in \textsf{QBorrow} can be found in the file \texttt{QBorrow/examples/mcx.qbr} as shown below:
\begin{lstlisting}[style=qborrow]
// mcx.qbrs
let m = 1750;
let n = m + (m - 1); // n-controlled NOT gate

borrow@ q[n];
borrow@ t;

borrow anc;

// first part
CCNOT[q[n - 1], q[n], anc];

for i = (m - 2) to 2 {
    CCNOT[q[2 * i - 1], q[2 * i + 1], q[2 * i + 2]];
}

CCNOT[q[1], q[3], q[4]];

for i = 2 to (m - 2) {
    CCNOT[q[2 * i - 1], q[2 * i + 1], q[2 * i + 2]];
}

CCNOT[q[n - 1], q[n], anc];

for i = (m - 2) to 2 {
    CCNOT[q[2 * i - 1], q[2 * i + 1], q[2 * i + 2]];
}

CCNOT[q[1], q[3], q[4]];

for i = 2 to (m - 2) {
    CCNOT[q[2 * i - 1], q[2 * i + 1], q[2 * i + 2]];
}

// second part

CCNOT[q[n], anc, t];

for i = (m - 1) to 3 {
    CCNOT[q[2 * i - 1], q[2 * i], q[2 * i + 1]];
}

CCNOT[q[2], q[4], q[5]];

for i = 3 to (m - 1) {
    CCNOT[q[2 * i - 1], q[2 * i], q[2 * i + 1]];
}

CCNOT[q[n], anc, t];

for i = (m - 1) to 3 {
    CCNOT[q[2 * i - 1], q[2 * i], q[2 * i + 1]];
}

CCNOT[q[2], q[4], q[5]];

for i = 3 to (m - 1) {
    CCNOT[q[2 * i - 1], q[2 * i], q[2 * i + 1]];
}


// third part

CCNOT[q[n - 1], q[n], anc];

for i = (m - 2) to 2 {
    CCNOT[q[2 * i - 1], q[2 * i + 1], q[2 * i + 2]];
}

CCNOT[q[1], q[3], q[4]];

for i = 2 to (m - 2) {
    CCNOT[q[2 * i - 1], q[2 * i + 1], q[2 * i + 2]];
}

CCNOT[q[n - 1], q[n], anc];

for i = (m - 2) to 2 {
    CCNOT[q[2 * i - 1], q[2 * i + 1], q[2 * i + 2]];
}

CCNOT[q[1], q[3], q[4]];

for i = 2 to (m - 2) {
    CCNOT[q[2 * i - 1], q[2 * i + 1], q[2 * i + 2]];
}

// fourth part 

CCNOT[q[n], anc, t];

for i = (m - 1) to 3 {
    CCNOT[q[2 * i - 1], q[2 * i], q[2 * i + 1]];
}

CCNOT[q[2], q[4], q[5]];

for i = 3 to (m - 1) {
    CCNOT[q[2 * i - 1], q[2 * i], q[2 * i + 1]];
}

CCNOT[q[n], anc, t];

release anc;

for i = (m - 1) to 3 {
    CCNOT[q[2 * i - 1], q[2 * i], q[2 * i + 1]];
}

CCNOT[q[2], q[4], q[5]];

for i = 3 to (m - 1) {
    CCNOT[q[2 * i - 1], q[2 * i], q[2 * i + 1]];
}
\end{lstlisting}

\subsection{Experimental Results}
We conducted verification experiments on a MacBook Air with an M3 chip and 24GB of RAM.
The results are summarized in the Figure~\ref{fig:exp-adder} and Figure~\ref{fig:exp-mcx}.

\newpage

\end{document}